\renewcommand{\S}{Section~}
\newcommand{\Uopt}[2]{{U^*_{#1}(#2)}}
\newcommand{\hUopt}[2]{{\hat U^*_{#1}(#2)}}
\newcommand{\Upi}[2]{{U^\pi_{#1}(#2)}}
\newcommand{\hUpi}[2]{{\hat U^\pi_{#1}(#2)}}
\newcommand{\Ugrid}[2]{{U^*_{#1,\eta}(#2)}}
\newcommand{\Ulen}[3]{{u^*_{#1}(#2,#3)}}
\newcommand{\Ulenpi}[3]{{u^\pi_{#1}(#2,#3)}}
\newcommand{\Ugridlen}[3]{{u^*_{#1,\eta}(#2,#3)}}
\newcommand{\scalarPrice}{{\mu}}
\newcommand{\perr}{\underline{\varepsilon}}
\newcommand{\hp}{\hat{\bm p}^\ell_{t,s}}
\newcommand{\hP}{\hat{\bm P}(\ell)}
\renewcommand{\L}{\mathcal L}
\newcommand{\V}{\mathcal V}
\newcommand{\D}{\mathcal D}
\newcommand{\LL}{\mathbb L}
\newcommand{\VV}{\mathbb V}
\newcommand{\DD}{\mathbb D}
\newcommand{\KK}{\mathbb K}
\renewcommand{\SS}{\mathcal S}
\newcommand{\R}{\mathbb R}
\newcommand{\rev}{\mathsf{Rev}}
\newcommand{\crev}{\mathsf{CmlRev}}
\newcommand{\ecrev}{\mathsf{ECRev}}
\DeclareMathOperator{\Expectation}{\mathbb E}
\newcommand{\E}[2]{\Expectation\limits_{#1}\left[#2\right]}
\newcommand{\X}{\mathcal X}
\newcommand{\F}{\bar F}
\newcommand{\ceil}[1]{\left\lceil #1\right\rceil}
\newcommand{\floor}[1]{\left\lfloor #1\right\rfloor}
\newtheorem{theorem}{Theorem}
\newtheorem{claim}[theorem]{Claim}
\newtheorem{lemma}[theorem]{Lemma}
\newtheorem{corollary}[theorem]{Corollary}
\newtheorem{definition}[theorem]{Definition}
\title{Online Revenue Maximization for Server Pricing}
\author{
  Shant Boodaghians\thanks{Partially supported by NSF grant CCF-1750436.} \\
  University of Illinois at Urbana-Champaign,\\
  Urbana IL 61801, USA \\
  \texttt{boodagh2@illinois.edu} \\
   \And
  Federico Fusco\thanks{Supported by ERC Advanced Grant 788893 AMDROMA ``Algorithmic and Mechanism Design Research in Online Market'' and MIUR PRIN project ALGADIMAR ``Algorithms, Games, and Digital Markets''} \\
  Department of Computer, Control\\ and Management Engineering\\
  Sapienza University\\
  Rome, Italy \\
  \texttt{fuscof@diag.uniroma1.it}
   \And
   Stefano Leonardi\footnotemark[2] \\
  Department of Computer, Control\\ and Management Engineering\\
  Sapienza University\\
  Rome, Italy \\
  \texttt{leonardi@diag.uniroma1.it}
   \And
  Yishay Mansour \\
  Tel Aviv University,\\
  P.O. Box 39040, Tel Aviv 6997801, Israel\\
  \texttt{mansour@tau.ac.il}
  \And
  Ruta Mehta\footnotemark[1] \\
  University of Illinois at Urbana-Champaign,\\
  Urbana IL 61801, USA \\
  \texttt{rutamehta@cs.illinois.edu} \\
}
\begin{document}

\maketitle

 \begin{abstract}
    Efficient and truthful mechanisms to price resources on remote servers/machines has been the subject of much work in recent years due to the importance of the cloud market.
    This paper considers revenue maximization in the online stochastic setting with non-preemptive jobs and a unit capacity server. 
    One agent/job arrives at every time step, with parameters drawn from an underlying distribution.
    We design a posted-price mechanism  
    which can be efficiently computed, and is revenue-optimal in expectation and in retrospect, up to additive error. 
    The prices are posted prior to learning the agent's type, and
    the computed pricing scheme is deterministic, depending only on the length of the allotted time interval and on the earliest time the server is available.  
    We also prove that the proposed pricing strategy is robust to  imprecise knowledge of the job distribution and that a distribution learned from polynomially many samples is sufficient to obtain a near-optimal truthful pricing strategy. 
\end{abstract}

\clearpage

\section{Introduction}
Designing mechanisms for a desired outcome with
strategic and selfish agents is an extensively studied problem in economics, 
with classical work by
Myerson \cite{myerson81}, and Vickrey-Clarke-Groves \cite{vickrey61}, for truthful mechanisms. 
The advent of online interaction and e-commerce has added an efficiency constraint on the mechanisms, going so far as to prioritize computational efficiency over classical objectives: {\em e.g.} choosing simple
approximate mechanisms when optimal mechanisms are computationally difficult, or impossible. 
Beginning with
Nisan and Ronen \cite{NR99}, the theoretical computer science community has contributed greatly to the field,
in both fundamental problems and specific applications.
These include designing truthful mechanisms for the maximization of welfare and revenue, and has also focused on learning distributions of agent types, menu complexity, and dynamic mechanisms ({\em e.g.}, \cite{denBoer2015dynamic,cole-roughgarden-2014-sample}.) 

We consider this question in the setting of selling 
computational resources on remote servers or machines ({\em cf.} \cite{tang-li-fu2017budget,Babaioff2017}.)  This is arguably one of the fastest growing markets on the Internet.  
The goods (resources)
are assigned non-preemptively and thus 
have strong complementarities. 
Furthermore, since the supply (server capacity) is limited, any mechanism trades immediate revenue for future supply. Finally, mechanisms must be incentive-compatible, 
as non-truthful, strategic, behaviour from the agents can skew the performance of a mechanism from its theoretical guarantees. This leads us to the following question:

\subparagraph*{Question.} {Can we design an efficient, truthful, and revenue-maximizing mechanism to sell time-slots non-preemptively on a single server?}

\smallskip

We design a posted-price mechanism which maximizes expected revenue up to additive error, for agents/buyers arriving online, with parameters of value, length and maximum delay, drawn from an underlying distribution. 
 
Three key aspects distinguish our problem
from standard online scheduling: 
(i)~In~our setting, as time progresses, the server clears up, allowing longer jobs to be scheduled in the future if no smaller jobs are scheduled until then. 
(ii)~Scheduling the jobs is not exclusively to the discretion of the mechanism designer, but must also be desired by the job itself, while also producing sufficient revenue. 
(iii)~As~the mechanism designer, we do not have access to job parameters in an incentive-compatible way before deciding on a posted price menu.
These three features lie at the core of the difficulty of our problem. 
Our focus will be on devising online mechanisms
in the Bayesian setting. 

In our online model, time on the server is discrete. 
At every time step, an agent arrives on the server, with a value $V$, length requirement $L$, and maximum delay $D$.
These parameters are drawn from a common distribution, {\em i.i.d.} across jobs.
The job wishes to be scheduled for at least $L$ consecutive time slots, no more than $D$ time units after its arrival, and wishes to pay no more than $V$. 
Jobs are assumed to have quasi-linear utility in money, and so prefer the least-price interval within their constraints.
The mechanism designer never learns the parameters of the job.
Instead, she posts a price menu of (length,price) pairs, and the minimum available delay $s$. The job accepts to be scheduled so long as $D\geq s$, and there is some (length,price) pair in the menu of length at least $L$ and price at most $V$.
We note that the pricing scheme can be dynamic, changing through time. 
If, at time epoch $t$, an agent chooses option $(\ell,\pi_\ell)$, then she pays $\pi_\ell$ and her job will be allocated to the interval $[t+s,t+s+\ell]$. She will choose the option which minimizes $\pi_\ell$.
Throughout this paper we assume that the random variables $L,V,D$ are discrete, and have finite support, unless specified differently.

\subsection{Summary of Our Results}

\begin{enumerate}
\item  We model the problem of finding a revenue maximizing pricing strategy as a Markov Decision Process (MDP).  
Given a price menu (length,price) and a state (minimum available delay) $s$ at time $t$,  
the probability of transition to any other state at time $t+1$ is obtained from the distribution of the job's parameters.  
The revenue maximizing pricing strategy can be efficiently computed via backwards induction.  
We also present, in Appendix~\ref{sec:Approx-Grid-Values}, an approximation scheme in the case where $V$ is a continuous random variable.

\item We prove that the optimal pricing strategy is monotone in length under a distributional assumption, which
we show is satisfied when the jobs' valuation follows a log-concave distribution, parametrized by length. 
Recall that log-concave distributions are exactly those which have a monotone hazard rate.
This implies the existence of an optimal pricing mechanism which ensures truthfulness in the finite horizon setting when the distributions are known. In Appendix~\ref{sec:Infinite-Horizon-Approx}, this is extended to the infinite discounted horizon setting,
incurring a small additive error.
We also demonstrate good concentration bounds of the revenue obtained by the optimal truthful posted price strategy. 

\item We finally investigate the robustness of the pricing strategy.  We first show that a near optimal solution is still obtained when the distribution is known with a certain degree of uncertainty. We complement this result by analyzing the performances of the proposed pricing strategy when the distribution is only known from samples collected through the observations of the agents' decisions.
We provide a truthful posted price  $\varepsilon$-approximate mechanism if the number of samples is polynomial in $1/\varepsilon$ and the size of the support of the distribution. 
\end{enumerate}

\subsection{Related Work}
Much recent work has focused on 
designing efficient mechanisms for pricing cloud resources. 
Chawla et al. \cite{chawla17} recently studied ``time-of-use'' pricing mechanisms, to match demand to supply with deadlines and online arrivals.
Their result assumes large-capacity servers, and seeks to maximize welfare in a setting in which the jobs arriving over time are not i.i.d..
\cite{Azar2015TruthfulOS} provides a mechanism 
for preemptive scheduling with deadlines, 
maximizing the total value of completed jobs. 
Another possible objective for the design of incentive-compatible scheduling mechanisms is the total value of completed jobs,
which have release times and deadlines.
\cite{Porter:2004:MDO:988772.988783} solves this problem in an online setting, while \cite{DBLP:conf/icpp/CarrollG08}, in the offline setting for parallel machines, and \cite{DBLP:conf/atal/StrohleGWSR14}, in the online competitive setting with uncertain supply.   
\cite{Jain2013ATM} focuses on social welfare maximization for non-preemptive scheduling on multiple servers, and obtains a constant competitive ratio as the number of servers increases.
Our work differs from these by considering revenue maximization and stochastic job types which are i.i.d. over time.
\cite{Kilcioglu2016CompetitionOP} addresses computing a price menu for revenue maximization with
different machines.
Finally, \cite{Babaioff2017} proposes a system architecture for scheduling and pricing in cloud computing. 

Posted price mechanisms (PPM)  have been introduced by \cite{sandholm2004} and have gained attention due to their simplicity, robustness to collusion, and their ease of implementation in practice. 
One of the first theoretical results concerning PPM's is an asymptotic comparison to classical single-parameter mechanisms
\cite{bh08}. 
They were later studied by \cite{chms10} for the objective of revenue maximization, and further strengthened by \cite{kw12} and \cite{dk15}. 
\cite{fgl15} shows that sequential PPM's can $\sfrac{1}{2}$-approximate social welfare
for XOS valuation functions, if the price for an item is equal to the expected contribution of the item to the social welfare.

Sample complexity for revenue maximization was recently been studied in \cite{cole-roughgarden-2014-sample} showing that a polynomially many of samples suffice to obtain near optimal Bayesian auction mechanisms. An approach based on statistical learning that allows to learn mechanisms with expected revenue arbitrarily close to optimal from a polynomial number of samples has been proposed in \cite{Morgenstern:2015:PNA:2969239.2969255}.  The problems of learning simple auctions from samples has been studied in \cite{pmlr-v49-morgenstern16}.

\subsection{Structure of the Paper}
In~\S\ref{sec:model} we describe the model of the problem as a Markov Decision Process. 
In~\S\ref{sec:Bayes-opt-policies} we present an efficient algorithm for computing optimal policies for the finite time horizon given full knowledge of the distribution of the jobs' paramethers.
This is extended to other settings in Appendix~\ref{app:extensions}.
In~\S\ref{sec:Monotone-Opt}, we demonstrate that the optimal policy is monotone and in ~\S\ref{sec:concentration-full-knowledge} we describe the concentration bounds on the revenue of a pricing policy.
~\S\ref{subsec:learning-hat-Q} gives the learning algorithm and error bounds for computing the pricing policies with only (partial)  sample access to the job distribution. 
In 
Finally, \S\ref{sec:performance} and \S\ref{sec:conclusion} are devoted to describing and summarizing the final result and future directions of research. 

Proof details are provided in Appendix~\ref{app:proofs}.

\section{Model}
\label{sec:model}

\subparagraph*{Notation.}
In what follows, the variables $t$, $\ell$ or $L$, $v$ or $V$, and $d$ or $D$ 
are reserved for describing the parameters of a job that wishes to be scheduled.
Respectively, they represent the arrival time $t$, required length $\ell$, value $v$, and maximum allowed delay $d$.
The lowercase variables represent fixed values, whereas the uppercase represent random variables.
Script-uppercase letters $\L,\V,\D$ represent the supports of the distributions on $L$, $V$, 
and $D$, respectively; and the bold-uppercase letters $\LL,\VV,\DD$ represent the maximum values 
in these respective sets.
Finally, $\pi$ is reserved for pricing policy, whereas $p$ is reserved for probabilities.

\subparagraph*{Single-Machine, Non-Preemptive, Job Scheduling.}
A sequence of random jobs wish to be scheduled on a server, non-preemptively, for a sufficiently low price, within a time constraint. 
Formally, at every time step $t$, a single job with parameters $(L,V,D)$ is drawn from an underlying distribution $Q$ over the space $\L\times\V\times\D$.
It wishes to be scheduled for a price $\pi\leq V$ in an interval $[a,b]$ such that $a-t\leq D$ and $b-a\geq L$.

\subparagraph*{Price Menus.}\label{sec:online-model}
Our goal is to design a take-it-or-leave-it, posted-price mechanism which maximizes expected revenue.
At each time period, the mechanism posts a ``price menu'' and an earliest-available-time $s_t$, indicating that times $t$ through $t + s_t - 1$ have already been scheduled. ($s_t$ will henceforth be referred to as the {\em state} of the server.)	
We let $\SS := \{0,\,\dotsc,\,\DD+\LL\}$ to be the set of all possible states.
The state of the server at a given time $t$ is naturally a random variable which depends on the earlier jobs and on the adopted policy $\pi$. 
As before, we will denote with $s$ or $s_t$ the fixed value, and with $S$ or $S_t$ the corresponding random variable.
The price menu will be given by the function $\pi: [T] \times \SS \times \L \to \R$, i.e., if we are a time $t$ and the server is in state $s_t$, then the prices are set according to $\pi_t(s_t,\cdot): \L \to \R.$ 
The reported pair $(\pi_t(s_t,\cdot),s_t)$ is computed by the scheduler's strategy, which we determine in this paper.
Once this is posted, a job $(L, V,D)$ is then sampled {\em i.i.d.} from the underlying distribution $Q$. 

If $V\geq \pi_t(s_t,\ell)$ for some $\ell\geq L$, and $D\geq s_t$, then the job accepts the schedule, and reports the length $\ell\geq L$ which minimize price.
Otherwise, the job reports $\ell=0$ and is not scheduled.
To guarantee truthfulness, it suffices to have $\pi_t(s,\cdot)$ be monotonically non-decreasing for every state $s$: the agent would not want a longer interval since it costs more, and would not want one of the shorter intervals since they cannot run the job.
It should be clear that the mechanism's strategy is to always report monotone non-decreasing prices, as a decrease in the price menu will only cause more utilization of the server, without accruing more revenue. The main technical challenge in this paper, then, is to show that under some assumptions, the optimal strategy is monotone non-decreasing, and efficiently computable.

\subparagraph*{Revenue Objective.}
Revenue can be measured in either a {\em finite} or an {\em infinite discounted} horizon. 
In the former (finite) case, only $T$ time periods will occur, 
and we seek to maximize the expected sum of revenue over these periods.
In the infinite-horizon setting, future revenue is discounted, at an exponentially decaying rate. Formally, revenue at time $t$ is worth a $\gamma^t$ fraction of revenue at time 0, for some fixed $\gamma<1$. See Appendix~\ref{sec:Infinite-Horizon-Approx}.
Recall that the job parameters are drawn independently at random from the underlying distribution, so the scheduler can only base their ``price menu'' on the state of the system and the current time. 
Thus, the only realistic strategy is to fix a state-and-time-dependent pricing policy $\pi : [T]\times \SS \times \L \to \R$, ``$\pi_t(s,\ell)$'', where $[T]:=\{0,\,1,\,\dotsc,\,T\}$.

Let $\mathcal X=\{\X_1:=(1,L_1,V_1,D_1),\,\X_2:=(2,L_2,V_2,D_2),\,\X_3,\,\dotsc\}$ be the random sequence of jobs arriving, sampled {\em i.i.d.}
from the underlying distribution.    
Let $\pi:[T]\times \SS\times \L\to \R$ be the pricing policy.
We denote as $\rev_t(\mathcal X,\pi)$ the revenue earned at time $t$ with policy $\pi$ and sequence $\mathcal X$.
If $\X_t$ does not buy, then $\rev_t(\mathcal X,\pi)=0$, and otherwise, it is equal to $\pi_t(s_t,L_t)$.
We denote as $\crev_T$
the total (cumulative) revenue earned over the $T$ periods.
Thus,
\begin{equation}
\crev_T(\mathcal X,\pi):=\textstyle\sum_{t=0}^T \rev_t(\mathcal X,\pi).
\end{equation}
We will also need the expected-future-revenue, given a current time and server state, which we will denote as follows:
\begin{equation}
    \label{eq:expected_gain}
\Upi t s=\mathbb{E}_{\X _\geq t}\left[\textstyle\sum_{i=t}^T
    \rev_i(\pi,\X)\;\middle| S_t=s \right],
\end{equation}
The subscript of the expectation 
$\X_{\geq t}$ denotes that we consider only jobs arriving from time $t$ onward.
Our objective is to find the pricing policy $\pi$ which maximizes $\Upi{0}{s=0}$.
Call this $\pi^*$, and denote the expected revenue under $\pi^*$ as $\Uopt{t}{\cdot}$.

\section{Bayes-optimal Strategies for Sever Pricing}
\label{sec:Bayes-opt-policies}

In this section we seek to compute an optimal monotone pricing policy $\pi:[T] \times \SS\times \L \to \R$ which maximizes revenue in expectation over $T$ jobs sampled {\em i.i.d.} from an underlying known distribution $Q$. 
This is extended to the infinite-horizon, discounted, setting in Appendix~\ref{sec:Infinite-Horizon-Approx}.

We first model the problem of maximizing the revenue in online server pricing as a Markov Decision Process that admits an efficiently-computable, optimal pricing strategy. 
The main contribution of this section is to show that, for a natural assumption on the distribution $Q$, the optimal policy is monotone.
We recall that this allows us to derive truthful Bayes-optimal mechanisms.

\subsection{Markov Decision Processes.}
\label{sec:MDP-description}

We show that the theory of {\em Markov Decision Processes} is well suited to model our problem.  A Markov Decision Process is, in its essence, a Markov Chain whose transition probabilities depend on the {\em action} chosen at each state, and where to each transition is assigned a reward. 
A {\em policy} is then a function $\pi$ mapping states to actions.
In our setting, the states are the states of the system outlined in Section~\ref{sec:online-model} (i.e., the possible delays before the earliest available time on the server), and the actions are the ``price menus.''
At every state $s$, a job of a random length arrives, and with some probability, chooses to be scheduled, given the choice of prices. 
The next state is either $\max\{s-1,0\}$, if the job does not choose to be scheduled (since we have moved forward in time), or $s+\ell-1$, if a job of length $\ell$ is scheduled, since we have occupied $\ell$ more units.
The transition probabilities depend on the distribution of job lengths, and the probability that a job accepts to be scheduled given the pricing policy (action). 
Formally, 
\begin{align}
        \mathbb{P}[s_{t+1}=s_t+\ell-1]
        = \begin{cases}
            \mathbb{P}\left[
                L_t=\ell,
                V_t\geq \pi_t(s_t,\ell),
                D_t\geq s_t+\ell
            \right] & \text { if }\ell\geq 1\\
            1-\sum_{k\geq 0}\mathbb{P}[s_{t+1}=s_t+k] & \text{ if }\ell = 0
        \end{cases}\label{eq:MDP-transition-probs}
    \end{align}
{\em (Transitions to state ``$-1$'' should be read as transitions to state ``$\,0\!$''.)}~ 
Note~that a job of length~$\ell$ may choose to purchase an interval of length greater than $\ell$, which would render these transition probabilities incorrect. 
However, this may only happen if the larger interval is more affordable.
It is therefore in the scheduler's interest to guarantee that $\pi_t(s,\cdot)$ in monotone non-decreasing in~$\ell$, which incentivizes truthfulness, since this increases the amount of server-time available, without affecting revenue.
Thus we restrict ourselves to this case.   
    
    It remains to define the transition rewards. They are simply the revenue earned. Formally, 
    a transition from state $s_t$ to $s_t+\ell-1$ incurs a reward of $\pi_t(s,\ell)$, 
    whereas a transition from state $s_t$ to $s_{t}-1$ incurs 0 reward.
We wish to compute a policy $\pi$ in such a way as to maximize the expected cumulative revenue, 
given as the (possibly discounted) sum of all transition rewards in expectation.

\subsection{Solving for the Optimal Policy with Distributional Knowledge}
In this section, we present a modified MDP whose optimal policies can be efficiently computed, and show that these policies are optimal for the original MDP.
In this section, we assume that the mechanism designer is given access to the underlying distribution $Q$. 
However, in the following sections, we will show that if the distribution $Q$ is estimated from samples, then solving for the MDP on this estimated distribution is sufficient to ensure sufficiently good revenue guarantees. 

Since the problem has been modelled as a Markov Decision Process (MDP), we may rely on the wealth of literature available on MDP solutions, in particular we will leverage the {\em backwards induction} algorithm (BIA) of~\cite{puterman2005markov} \S4.5, included in Appendix~\ref{app:proofs} as Algorithm~\ref{alg:MDPAlgorithm}. 
We will however need to ensure that this standard algorithm (i) runs efficiently, and (ii) returns a monotone pricing policy.

Note that past prices do not contribute to future revenue insofar as the current state remains unchanged.
Thus, to compute optimal current prices, we need only know the current state and expected future revenue.
This allows us to use the BIA.
The idea is to compute the optimal time-dependent policy, 
and the incurred expected reward, for shorter horizons, 
then use this to recursively compute the optimal policies for longer horizons.

The total runtime of the BIA is $O(T|\mathcal S||\mathcal A|)$, where $\mathcal S$ and $\mathcal A$ denote the action and state spaces, respectively. 
Note that the dependence on $T$ is unavoidable, since any optimal policy must be time-dependent. 
Recall that $\LL$ and $\DD$ denote the maximum values that $L$ and $D$ can take, respectively, and $\V$ is the set of possible values that $V$ can take.
Denote $\KK:=\max\{\DD+\LL,|\V|\}$. 
If we define the action space na\"ively, we have $|\mathcal S|=\DD+\LL\leq \KK$, and $|\mathcal A|\leq\KK^\LL$.
Thus, a na\"ive definition of the MDP bounds the runtime at $\KK^{O(\KK)}$, which is far from efficient.
Requiring monotonocity only affects lower-order terms.

\subparagraph*{Modified MDP.} To avoid this exponential dependence, we can be a little more clever about the definition of the state space:
instead of states being the possible server states, we define our state space as possible (state, length) pairs. Thus, when the MDP is in state $(s,\ell)$, the server is in state $s$, and a job of length $\ell$ has been sampled from the distribution. 
Our action-space then is simply the possible values of $\pi_t(s,\ell)$, and
the transition probabilities and rewards become:
\begin{align}
	\mathbb{P}[(s,\ell)&\to(s',\ell')|\pi]=\begin{cases}
		\mathbb{P}[V\geq \pi_t(s,\ell),D\geq s|L=\ell]\mathbb{P}[L'=\ell']&\text{ if }s'=s+\ell-1\\
		\mathbb{P}[V<\pi_t(s,\ell) \text{ or } D<s|L=\ell]\mathbb{P}[L'=\ell']&\text{ if }s'=s-1\\
		0&\text{otherwise}
	\end{cases}\label{eq:mod-mdp-transition}\\
	R((s,\ell)&\to(s',\ell')|\pi)=\begin{cases}
		\pi_t(s,\ell)&\text{ if }s'=s+\ell-1\\
		0&\text{otherwise}
	\end{cases}\label{eq:mod-mdp-reward}
\end{align}

Therefore, we get $|\mathcal S|=(\DD+\LL)\cdot \LL\leq \KK^2$, and $|\mathcal A|\leq\KK$.
Thus, the runtime of the algorithm becomes $O(T\KK^3)$.
A full description of the procedure is given in Appendix~\ref{app:proofs} as Algorithm~\ref{alg:finite_horizon}. 
It remains to prove that it is correct.
We begin by claiming that these two MDPs are equivalent in the following sense:
\begin{lemma} 
\label{lemma_equivalence}
For any fixed pricing policy $\pi:[T]\times\SS\times\L\to \R$,\vspace*{-0.35em}
\[
\Upi t s=\E{L}{\Ulenpi t s L}, \forall t \in T , \ s  \in \SS,\\[-0.35em]
\] where the $\Upi t {\cdot}$'s are as in \eqref{eq:expected_gain}, and the $\Ulenpi t {\cdot}{\cdot}$'s are from the modified MDP. 
\end{lemma}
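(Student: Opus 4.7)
The plan is to prove the identity by backward induction on the time step $t$, running from the terminal horizon $T$ down to $0$. The base case $t=T+1$ is trivial since both sides equal $0$ by the definition of the finite-horizon revenue. The inductive step relies on comparing the Bellman equations of the two MDPs and using the tower property of expectation.

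First I would write the Bellman recursion for the original MDP conditioning on the identity of the arriving job $(L_t,V_t,D_t)\sim Q$: for $s\in\SS$,
\[
\Upi{t}{s}=\E{L,V,D}{\pi_t(s,L)\,\mathbb{1}[V\ge \pi_t(s,L),\,D\ge s]+\Upi{t+1}{S_{t+1}(s,L,V,D)}},
\]
where $S_{t+1}(s,L,V,D)$ equals $s+L-1$ if the job is accepted and $\max\{s-1,0\}$ otherwise. Grouping by the value of $L$ and using the independence of the next-period draw from the current one, we get
\[
\Upi{t}{s}=\E{L}{\,\mathbb{P}[V\ge\pi_t(s,L),D\ge s\mid L]\cdot\pi_t(s,L)+\E{}{\Upi{t+1}{S_{t+1}}\mid L}\,}.
\]

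Next I would write the Bellman recursion for the modified MDP at state $(s,\ell)$ using the transition law \eqref{eq:mod-mdp-transition} and rewards \eqref{eq:mod-mdp-reward}. Because in the modified MDP the next state factorizes as (next server state)$\times$(next job length $L'$) with $L'$ drawn independently from the marginal of $Q$ on $\L$, the recursion reads
\[
\Ulenpi{t}{s}{\ell}=\mathbb{P}[V\ge\pi_t(s,\ell),D\ge s\mid L=\ell]\cdot\pi_t(s,\ell)+\E{L'}{\,\Ulenpi{t+1}{s'(\ell)}{L'}\,},
\]
where $s'(\ell)$ is a random variable taking value $s+\ell-1$ on acceptance and $\max\{s-1,0\}$ otherwise, with the correct conditional probabilities.

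The crux is then to take $\E{L}{\,\cdot\,}$ of the modified recursion and invoke the inductive hypothesis $\E{L'}{\Ulenpi{t+1}{s'}{L'}}=\Upi{t+1}{s'}$ inside the inner expectation (valid because $s'$ depends only on $(s,\ell,V,D)$, so by tower we may pull the expectation over $L'$ through). This collapses the inner term to $\E{}{\Upi{t+1}{S_{t+1}}\mid L}$, which matches the grouped-by-$L$ form of the original recursion term by term, completing the induction.

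The main obstacle is purely bookkeeping: ensuring that the ``not-accepted'' branch is handled uniformly (in particular that the boundary $s=0$ transitioning to $0$ rather than $-1$ is treated consistently in both MDPs, as the parenthetical after \eqref{eq:MDP-transition-probs} instructs), and that the delay constraint $D\ge s$ is incorporated correctly on both sides. A minor additional care is needed when the fresh length $L'$ drawn in the modified MDP is interpreted, via the inductive hypothesis, as the arriving job at time $t+1$ in the original MDP; this is justified by the i.i.d.\ assumption on $(L_t,V_t,D_t)$.
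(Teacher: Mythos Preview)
Your proposal is correct and follows essentially the same approach as the paper's proof: backward induction on $t$, expanding the Bellman recursion of the modified MDP, invoking the inductive hypothesis to replace $\E{L'}{\Ulenpi{t+1}{s'}{L'}}$ by $\Upi{t+1}{s'}$, and then matching the result against the recursion for $\Upi{t}{s}$. The only cosmetic difference is that the paper anchors the base case at $t=T$ (where both quantities are defined to be zero) rather than at $t=T+1$, and it writes out the two branches $s'=s+\ell-1$ and $s'=s-1$ explicitly instead of packaging them into a random $s'(\ell)$.
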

(See Appendix~\ref{app:proofs} for a proof.)
This lemma, however, does not suffice on its own, as agents may behave strategically by over-reporting their length, if the prices are not increasing. 
This would alter the transition probabilities, breaking the analysis. 
We will see that under a mild assumption, this can not happen, as the optimal policy for non-strategic agents will  be monotone, and therefore truthful.

\subsection{Monotonicity of the Optimal Pricing Policies} \label{sec:Monotone-Opt}

Recall that the solution of the more efficient MDP formulation is only correct if we can show that it is always monotone without considering the strategic behaviour of agents, ensuring incentive-compatibility of the optimum.  

An optimal monotone strategy cannot be obtained for all the distributions on $L,V,$ and $D$.  As an example, for any distribution where a job's value is a deterministic function of their length, the optimal policy is to price-discriminate by length. If this function is not monotone, the optimum won't be either.
To this end, we introduce the following assumption, which we will discuss below, and which will imply monotonicity of the pricing policy.

 \subparagraph*{Assumption 1.} \textit{The quantity
\(
\tfrac{\mathbb{P}[V\geq \scalarPrice',D\geq s|L=\ell]}{\mathbb{P}[V\geq \scalarPrice,D\geq s|L=\ell]}
\) is monotone non-decreasing as $\ell$ grows, for any state $s$ and $0\leq\scalarPrice <\scalarPrice'$ fixed.}
\smallskip

This is not a natural, or immediately intuitive assumption.
However, we will show that it is satisfied if the valuation of jobs follows a log-concave distribution which is parametrized by the job's length, and where the valuation is (informally) positively correlated with this length.
Log-concave distributions are also commonly referred to as distributions possessing a {\em monotone hazard rate}, 
and it is common practice in economic settings to require this property of the agent valuations. 

\begin{lemma}\label{lem:log-concave-assumption}
Let, $V_\ell^s$ denote the marginal r.v. $V$ conditioned on $L=\ell$ and $D\geq s$.
    Let $Z$ be a continuously-supported random variable, and
    $\gamma_1^s\leq \gamma_2^s\leq \dotsm\in \R$.
    If $V_\ell^s$ is distributed like $\gamma_\ell^s \cdot Z$, 
    $\floor{\gamma_\ell^s \cdot Z}$, $Z+\gamma_\ell^s$, or $\floor{Z+\gamma_\ell^s}$, 
    then Assumption 1  is satisfied if $Z$ is log-concave, 
    or if the $\gamma$'s are independent of $\ell$.
\end{lemma}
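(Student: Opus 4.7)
The plan is to reduce the ratio in Assumption~1 to a survival-function ratio of the conditional law $V_\ell^s$, then handle the four distributional forms in two groups---multiplicative $\gamma_\ell^s\cdot Z$ and additive $Z+\gamma_\ell^s$---since the two floored variants reduce to their continuous counterparts via $\mathbb{P}[\floor{X}\geq \mu]=\mathbb{P}[X\geq\ceil{\mu}]$. The first step is to observe that $\mathbb{P}[D\geq s\mid L=\ell]$ appears identically in the numerator and denominator of the target ratio and cancels, leaving $\F_{V_\ell^s}(\mu')/\F_{V_\ell^s}(\mu)$. The task is then to show this survival ratio is non-decreasing in $\ell$.

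For the additive case, the ratio equals $\F_Z(\mu'-\gamma_\ell^s)/\F_Z(\mu-\gamma_\ell^s)$. Log-concavity of $f_Z$ implies log-concavity of the survival $\F_Z$, i.e., $x\mapsto \log\F_Z(x+\delta)-\log\F_Z(x)$ is non-increasing in $x$ for every $\delta\geq 0$. Applying this with $x=\mu-\gamma_\ell^s$ and $\delta=\mu'-\mu\geq 0$, and using that $x$ decreases as $\ell$ grows, yields the desired monotonicity. For the multiplicative case, the ratio equals $\F_Z(\mu'/\gamma_\ell^s)/\F_Z(\mu/\gamma_\ell^s)$, and it suffices to show it is non-decreasing in $\gamma>0$. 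Differentiating the log-ratio gives
\[
\frac{d}{d\gamma}\log\frac{\F_Z(\mu'/\gamma)}{\F_Z(\mu/\gamma)}=\frac{1}{\gamma^2}\bigl[\mu'\,h_Z(\mu'/\gamma)-\mu\,h_Z(\mu/\gamma)\bigr],
\]
where $h_Z=f_Z/\F_Z$ is the hazard rate of $Z$. Log-concavity of $f_Z$ is equivalent to $h_Z$ being non-decreasing, so together with $\mu'>\mu\geq 0$ this expression is non-negative.

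Finally, if the $\gamma_\ell^s$'s do not depend on $\ell$, the marginal $V_\ell^s$ is the same for every $\ell$, so the ratio is constant and trivially non-decreasing, covering the second hypothesis. The floored variants reduce to the unfloored ones by replacing $\mu,\mu'$ with $\ceil{\mu},\ceil{\mu'}$; in the degenerate subcase $\ceil{\mu}=\ceil{\mu'}$ the ratio is identically $1$.

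The main obstacle I anticipate is the multiplicative case: scaling by $\gamma$ moves both $\mu/\gamma$ and $\mu'/\gamma$ simultaneously, so the argument must jointly exploit $\mu'>\mu$ and monotonicity of $h_Z$---neither alone suffices. Trying to package this through a standard stochastic ordering (hazard-rate or MLR ordering of the family $\{\gamma Z\}_\gamma$) does not go through cleanly under pure scaling, which is why the direct derivative computation via hazard rates is the cleanest route. Minor boundary care is needed when $\mu=0$ or at points where $\F_Z$ vanishes, but these are routine.
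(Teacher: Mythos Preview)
Your proposal is correct and matches the paper closely on all points except the multiplicative case $V_\ell^s\sim\gamma_\ell^s\cdot Z$. There, the paper avoids calculus: it writes the desired inequality as
\[
\log\F(\mu/\gamma)-\log\F(\mu'/\gamma)\ \geq\ \log\F(\mu/\gamma')-\log\F(\mu'/\gamma')
\]
for $\gamma<\gamma'$, and proves it in two elementary steps. First, since $\log\F$ is concave, the drop over a window of fixed width $(\mu'-\mu)/\gamma$ is at least as large when the window starts at $\mu/\gamma$ as when it starts at the smaller point $\mu/\gamma'$; second, since $\mu/\gamma'+(\mu'-\mu)/\gamma\geq \mu'/\gamma'$ and $\log\F$ is non-increasing, the endpoint can be slid back to $\mu'/\gamma'$. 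Your route instead differentiates the log-ratio in $\gamma$ and invokes the monotone-hazard-rate characterization of log-concavity to sign $\mu' h_Z(\mu'/\gamma)-\mu\, h_Z(\mu/\gamma)$. Both arguments are valid; yours is shorter and more conceptual for readers who think in terms of MHR, while the paper's needs only concavity and monotonicity of $\log\F$ and sidesteps any smoothness assumptions on the density. The reductions (cancelling $\mathbb{P}[D\geq s\mid L=\ell]$, the additive case via concavity of $\log\F$, the floored variants via $\mathbb{P}[\lfloor X\rfloor\geq\mu]=\F(\lceil\mu\rceil)$, and the $\ell$-independent $\gamma$ case) are handled identically in both.
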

A discussion of log-concave random variables and a proof of this fact is given in Appendix~\ref{app:log-concave}. 
Many standard (discrete) distributions are (discrete) log-concave random variables, including the uniform, Gaussian, logistic, exponential, Poisson, binomial, etc.
These can be proved to be log-concave from the discussion in Appendix~\ref{app:log-concave}.
In the above, the $\gamma$ terms represent a notion of spread or shifting, parametrized by the length,
indicating some amount of positive correlation.

It remains to show price monotonicity under the above assumption.
First, we begin with the following,
which holds for arbitrary distributions.
\begin{lemma}
\label{lem:Monot-State-U}
    Let $\Uopt t s$ be the expected future revenue earned starting at time $t$ in state $s$, for the optimal policy computed by Algorithm~\ref{alg:finite_horizon}.
    Then the function $s\mapsto\Uopt t s$ is monotone non-increasing in $s$ for any $t$ fixed.
\end{lemma}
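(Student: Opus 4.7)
My plan is to proceed by backward induction on $t$. The base case $\Uopt{T+1}{s} \equiv 0$ is trivially non-increasing in $s$. For the inductive step, assume $\Uopt{t+1}{\cdot}$ is non-increasing, and fix adjacent states $s$ and $s+1$. I will exhibit an explicit (possibly suboptimal) pricing $\tilde\pi$ to be used at state $s$ whose induced expected revenue is at least $\Uopt{t}{s+1}$; since $\Uopt{t}{s}$ is the maximum over all pricings, this gives the desired inequality.

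Take $\tilde\pi := \pi^*_t(s+1,\cdot)$, the optimal prices at state $s+1$, adopting WLOG the canonical convention $\tilde\pi_\ell = +\infty$ whenever the maximum per-length contribution at state $s+1$ is zero. Under this convention, whenever $\tilde\pi_\ell$ is finite the optimality at state $s+1$ forces the strict net-benefit inequality $\tilde\pi_\ell + \Uopt{t+1}{s+\ell} - \Uopt{t+1}{s} > 0$. The key step is a length-by-length expansion of the expected one-step reward plus continuation value at state $s$ under $\tilde\pi$ minus the analogous quantity at state $s+1$ under $\tilde\pi$. Writing $a_\ell := \mathbb{P}[V\geq \tilde\pi_\ell, D\geq s\mid L=\ell] \geq b_\ell := \mathbb{P}[V\geq \tilde\pi_\ell, D\geq s+1\mid L=\ell]$ and $U(\cdot) := \Uopt{t+1}{\cdot}$, straightforward algebra gives the decomposition
\[
\frac{h_\ell}{\mathbb{P}[L=\ell]} \;=\; b_\ell\bigl[U(s{+}\ell{-}1)-U(s{+}\ell)\bigr] + (a_\ell-b_\ell)\bigl[\tilde\pi_\ell + U(s{+}\ell{-}1) - U(s)\bigr] + (1-a_\ell)\bigl[U(\max(s{-}1,0)) - U(s)\bigr].
\]
The first and third brackets are non-negative by the inductive hypothesis. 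For the middle bracket: either $b_\ell=0$, in which case the convention forces $\tilde\pi_\ell=+\infty$ (hence $a_\ell=0$) and the coefficient $a_\ell-b_\ell$ vanishes; or $b_\ell>0$, in which case the optimality inequality combined with $U(s{+}\ell{-}1)\geq U(s{+}\ell)$ from the IH makes the bracket strictly positive. Summing over $\ell$ weighted by $\mathbb{P}[L=\ell]$ yields $\mathbb{E}[\rev_t + \Uopt{t+1}{S_{t+1}} \mid S_t=s, \tilde\pi] \geq \Uopt{t}{s+1}$, completing the induction.

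The principal obstacle is the middle bracket, which captures exactly the mass of jobs with $D=s$ that newly accept at state $s$ but would have rejected at $s+1$. A priori these extra acceptances could hurt us by advancing the server to $s+\ell-1$ rather than $\max(s-1,0)$, so one has to ensure that the revenue they pay compensates for the deterioration in continuation value; this is precisely what the optimality of $\tilde\pi_\ell$ at the higher state $s+1$ supplies. A secondary point is the canonical-policy convention needed to handle lengths $\ell$ with zero contribution at $s+1$, and the boundary $s=0$, where $\max(s-1,0)=s=0$ so that the third bracket becomes $U(0)-U(1)\geq 0$, still valid by the IH.
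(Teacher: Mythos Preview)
Your proof is correct and follows essentially the same backward-induction scheme as the paper: assume $\Uopt{t+1}{\cdot}$ is non-increasing, and at state $s$ evaluate a suboptimal price menu derived from $\pi^*_t(s+1,\cdot)$ to lower-bound $\Uopt{t}{s}$ by $\Uopt{t}{s+1}$. The one technical difference is in which suboptimal price is used: the paper does not reuse $\pi^*_t(s+1,\ell)$ verbatim, but instead picks a higher price $\mu'_s\geq\pi^*_t(s+1,\ell)$ so that the acceptance probability at state $s$ matches $b_\ell$, then chains $u^*_t(s+1,\ell)\leq b_\ell(\mu'_s+\Delta)_++\Uopt{t+1}{s-1}\leq u^*_t(s,\ell)$; you instead keep the price fixed and absorb the extra acceptance mass $a_\ell-b_\ell$ into an explicit three-term decomposition, using optimality at $s+1$ to sign the middle bracket. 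Both routes need the same ``set $\pi=\infty$ when the contribution is zero'' convention (the paper's $(\cdot)_+$ plays the identical role). One cosmetic slip: at the boundary $s=0$ the third bracket is $U(0)-U(0)=0$, not $U(0)-U(1)$; either way it is non-negative, so the argument is unaffected.
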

See Appendix~\ref{app:proofs} for the proof.
This lemma
ensures that over-selling time on the server can only hurt the mechanism.
This allows us to conclude

\begin{lemma} \label{lem:monot-prices} \color{black}
    If the distribution on job parameters satisfies the above assumption, then for all $\ell,s,t$, we have $\pi^*_t(s,\ell)\leq \pi^*_t(s,\ell+1)$.
\end{lemma}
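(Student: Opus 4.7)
The plan is to reduce the choice of $\pi^*_t(s,\ell)$ to the argmax of a single-variable problem via one step of the Bellman equation, and then derive monotonicity by combining Lemma~\ref{lem:Monot-State-U} with Assumption~1 in a short contradiction argument. Writing $p_{s,\ell}(\mu):=\mathbb{P}[V\geq\mu,\,D\geq s\mid L=\ell]$ and substituting \eqref{eq:mod-mdp-transition}--\eqref{eq:mod-mdp-reward} into the one-step Bellman optimality, I would obtain
\[
\pi^*_t(s,\ell)\in \argmax_{\mu\geq 0}\ p_{s,\ell}(\mu)\,(\mu-\delta_\ell),\qquad \delta_\ell:=\Uopt{t+1}{s-1}-\Uopt{t+1}{s+\ell-1}.
\]
By Lemma~\ref{lem:Monot-State-U}, the map $s\mapsto\Uopt{t+1}{s}$ is non-increasing, and so $0\leq\delta_\ell\leq\delta_{\ell+1}$. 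Since $\mu$ can always be taken arbitrarily large to yield objective value $0$, any maximizer with positive acceptance probability must also satisfy $\pi^*_t(s,\ell)\geq \delta_\ell$.

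Next, I would suppose for contradiction that $\pi_1:=\pi^*_t(s,\ell)>\pi_2:=\pi^*_t(s,\ell+1)$. Writing the optimality of $\pi_1$ at length $\ell$ and of $\pi_2$ at length $\ell+1$, using $\pi_2\geq\delta_{\ell+1}\geq\delta_\ell$ to keep all numerators and denominators positive, and rearranging into ratios, I would obtain
\[
\frac{p_{s,\ell}(\pi_1)}{p_{s,\ell}(\pi_2)}\geq\frac{\pi_2-\delta_\ell}{\pi_1-\delta_\ell},\qquad \frac{p_{s,\ell+1}(\pi_1)}{p_{s,\ell+1}(\pi_2)}\leq\frac{\pi_2-\delta_{\ell+1}}{\pi_1-\delta_{\ell+1}}.
\]
Applying Assumption~1 with $\mu=\pi_2<\pi_1=\mu'$ gives $\frac{p_{s,\ell+1}(\pi_1)}{p_{s,\ell+1}(\pi_2)}\geq\frac{p_{s,\ell}(\pi_1)}{p_{s,\ell}(\pi_2)}$, and chaining with the previous inequalities yields $\frac{\pi_2-\delta_\ell}{\pi_1-\delta_\ell}\leq\frac{\pi_2-\delta_{\ell+1}}{\pi_1-\delta_{\ell+1}}$. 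An elementary computation shows that $\delta\mapsto\frac{\pi_2-\delta}{\pi_1-\delta}$ has derivative $\frac{\pi_2-\pi_1}{(\pi_1-\delta)^2}<0$ on the relevant range and is therefore strictly decreasing; combined with $\delta_\ell\leq\delta_{\ell+1}$, this forces every intermediate inequality to be an equality, in which case tie-breaking lets us redefine $\pi^*_t(s,\ell):=\pi_2$ without losing optimality, and the assumed strict inequality is avoided.

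The main obstacle will be the careful handling of the corner cases in which the argmax is non-unique or the acceptance probability vanishes: when $p_{s,\ell+1}(\pi_2)=0$, both $\pi_1$ and $\pi_2$ yield objective $0$ at length $\ell+1$ so any monotone choice works; and when equalities chain, one needs an explicit tie-breaking rule (e.g., picking the smallest optimal price for each $\ell$, then verifying monotonicity is preserved). Finally, the conclusion extends to all $t$ by backward induction inside Algorithm~\ref{alg:finite_horizon}, since the monotonicity established at time $t{+}1$ guarantees truthfulness at time $t{+}1$, and therefore justifies the transition probabilities in \eqref{eq:mod-mdp-transition} that were used to derive the one-dimensional program at time $t$.
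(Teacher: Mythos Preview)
Your proposal is correct and takes essentially the same approach as the paper: both reduce via the Bellman recursion to the single-variable problem $\max_\mu\, p_{s,\ell}(\mu)(\mu-\delta_\ell)$, then combine Lemma~\ref{lem:Monot-State-U} (yielding $\delta_\ell\le\delta_{\ell+1}$) with Assumption~1 to compare optima across lengths. The only cosmetic difference is that the paper argues directly---showing that for every $\mu<\mu_0:=\pi^*_t(s,\ell)$ the optimality inequality $\tfrac{p_{s,\ell}(\mu_0)}{p_{s,\ell}(\mu)}\ge\tfrac{\mu-\delta_\ell}{\mu_0-\delta_\ell}$ is preserved as $\ell$ grows---whereas you argue by contradiction; the direct form spares you the explicit tie-breaking discussion, and your final backward-induction remark is unnecessary since the $U^*_{t+1}$ values used here are those of the modified MDP, which are well-defined independently of truthfulness at later times.
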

\begin{proof}[Sketch.]
A full proof may be found in Appendix~\ref{app:proofs}.
The idea is to show that, for any price $\scalarPrice$ less than the optimum $\pi^*_t(s,\ell)$,
the difference in revenue between charging $\scalarPrice$ and $\pi^*_t(s,\ell)$ to jobs of length $\ell$ is less than the difference in revenue between the same prices for jobs of length $\ell+1$.
This is achieved by applying the assumption to recursive definition of future revenue, along with the previous lemma.
Thus, we can conclude that the optimal price $\pi_t^*(s,\ell+1)$ must be greater than $\pi^*_t(s,\ell)$.
 \end{proof}
With Lemma~\ref{lem:monot-prices} and the
results of Appendix~\ref{app:extensions}, we finally have: 
\begin{theorem}
 \label{thm:monotone-opt} The online server pricing problem admits an optimal monotone pricing strategy when the variables $L$, $V$, and $D$ satisfy assumption 1. Also, 
 \begin{enumerate}
     \item In the finite horizon setting, when $\V$ is finitely supported, an exact optimum can be computed in time $O(T\KK^3)$.
     \item In the infinite horizon setting, when $\V$ is finitely supported, for all $\varepsilon>0$, an $\varepsilon$-additive-approximate policy can be computed in time \[ O\left(\KK^3 \log_\gamma\left(\tfrac{\varepsilon(1-\gamma)}{\VV}\right)\right) \leq O\left(\tfrac{\KK^3}{1-\gamma}\ln\left(\tfrac{\VV}{\varepsilon(1-\gamma)}\right)\right)\]
    \item In the finite horizon setting, when $V$ is continuously supported, for all $\eta>0$, an $\eta T$-additive-approximate policy can be computed in time $O(T\KK^2\VV/\eta)$.
 \end{enumerate}
\end{theorem}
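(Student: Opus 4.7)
The plan is to combine Lemma~\ref{lem:monot-prices}, which gives monotonicity of the optimal policy under Assumption 1, with Lemma~\ref{lemma_equivalence}, which lets us pass between the original MDP (with states $s \in \SS$) and the modified MDP (with states $(s,\ell)$), and then to analyze the runtimes of backwards induction in each of the three settings. Existence of an optimal monotone pricing strategy is immediate from Lemma~\ref{lem:monot-prices}. Monotonicity is what validates the use of the modified MDP in the first place: if $\pi_t^*(s,\cdot)$ were not monotone in $\ell$, then some agents would over-report their length, so the transition probabilities in \eqref{eq:mod-mdp-transition} would be wrong; monotonicity guarantees truthful reporting, so the modified MDP faithfully models the system.

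For part (1), I would run Algorithm~\ref{alg:MDPAlgorithm} on the modified MDP defined by \eqref{eq:mod-mdp-transition} and \eqref{eq:mod-mdp-reward}. Its state space has size $(\DD+\LL)\LL \leq \KK^2$, its action space at each state is a single scalar price drawn from $\V$, so at most $\KK$ choices, and the horizon is $T$. Since backwards induction spends $O(|\SS||\mathcal A|)$ work at each time step to evaluate all (state, action) pairs, the total cost is $O(T\KK^3)$. Lemma~\ref{lemma_equivalence} then turns the policy obtained on the modified MDP back into an optimal policy for the original problem, and Lemma~\ref{lem:monot-prices} certifies that it is monotone (hence truthful), so the transition model used during the backwards induction was indeed the correct one.

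For part (2), I would reduce to part (1) by truncation: the tail beyond a horizon $T'$ contributes at most $\sum_{t \geq T'} \gamma^t \VV = \gamma^{T'}\VV/(1-\gamma)$ in discounted expected revenue, so solving the finite-horizon MDP with $T' = \log_\gamma(\varepsilon(1-\gamma)/\VV)$ yields an $\varepsilon$-additive approximation. Applying part (1) with this $T'$ gives the first displayed bound, and the inequality $\ln(1/\gamma) \geq 1-\gamma$ for $\gamma \in (0,1)$ yields the simplified second form. For part (3), I would discretize the continuous valuation support $[0,\VV]$ on a grid of spacing $\eta$, reducing to a finite $\V$ of size $\VV/\eta$; each rounded-down price loses at most $\eta$ per served job, so the total additive error is at most $\eta T$. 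Plugging $|\V|=\VV/\eta$ into the state/action counts of the modified MDP gives $|\SS| \leq \KK^2$ (unchanged) and $|\mathcal A|\leq \VV/\eta$, hence runtime $O(T\KK^2 \VV/\eta)$, as claimed.

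The main obstacle is conceptual rather than computational: one must ensure that monotonicity of $\pi^*$ is not just a property of the optimum in the non-strategic model but also renders it incentive-compatible, so that the transition model \eqref{eq:mod-mdp-transition} actually describes agent behavior. Without this, the runtime analysis is routine, but with strategic misreporting the modified MDP would be ill-defined. The other subtle point, handled in the appendix but relied on here, is the continuity/discretization error in parts (2) and (3): in (2) the infinite-horizon value function must be uniformly approximated by its truncation, and in (3) the optimal continuous-valuation revenue must be approximated by a grid-based policy without the monotonicity argument breaking, since Assumption 1 is preserved under monotone discretization of $V$.
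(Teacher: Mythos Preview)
Your proposal is correct and follows essentially the same route as the paper: monotonicity via Lemma~\ref{lem:monot-prices}, efficiency via the modified MDP with Lemma~\ref{lemma_equivalence} guaranteeing equivalence, truncation for the infinite horizon (the paper's Lemma~\ref{lemma:infinite_horizon}), and grid discretization for continuous $V$ (the paper's Lemma~\ref{lem:revenue-loss-if-grid}). One small point worth noting: in part~(2) the paper additionally shows that the \emph{time-independent} policy $\bar\pi(\cdot,\cdot):=\pi_0(\cdot,\cdot)$ already attains the $\varepsilon$-approximation, by arguing that shifting the truncated policy forward in time can only improve discounted revenue; your truncation argument yields only a time-dependent policy, which still suffices for the theorem as stated but is slightly weaker than what the appendix proves. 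Also remember that the backwards induction in part~(2) must be run with the discount factor $\gamma$ incorporated into the recursion, a detail you leave implicit.
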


\subsection{Concentration Bounds on Revenue for Online Scheduling}\label{sec:concentration-full-knowledge}
\label{subsec:concentration-all-policies}
In this section, we show that the revenue of arbitrary policies concentrates around their mean. In particular it holds true for the optimal or approximately optimal strategies described above. This will also allow us to argue later that, if we have an estimate $\hat Q$ of $Q$, then execute Algorithm~\ref{alg:finite_horizon} given the distribution $\hat Q$, then the output policy will perform well with respect to $Q$, both in expectation, and with high probability. 
To show this concentration, we will consider the {\em Doob} or {\em exposure} martingale of the cumulative revenue function, introduced in Section~\ref{sec:model}.
Define
\begin{equation}\label{eq:martingale-def}
     R_i^\pi:=\E{}{\crev_T(\pi,\mathcal X)\middle|\mathcal X_1,\,\dotsc,\,\mathcal X_i}
\end{equation}
where the $\mathcal X_i$'s are jobs in the sequence $\mathcal X$ and the expected value is taken with respect to $\X_{i+1}, \dots \X_T$.
Thus, $R_0^\pi$ is the expected cumulative revenue, and $R_T^\pi$ is the random cumulative revenue.
To formally describe this martingale sequence, we will introduce some notation, and formalize some previous notation.
Recall that $\X_1 ,\, \X_2,\, \dotsc$ is a sequence of jobs sampled {\em i.i.d.} from an underlying distribution $Q$.  
Fix a pricing policy ${\pi:[T]\times \SS\times \L\to\R}$. 
Note that the state at time $t$ is a random variable depending on both the (deterministic) pricing policy and the (random) $\X_1, \dots, \X_{t-1}$.
We denote it $S_t(\pi,\mathcal X)$, or $S_t$ for short. 
Formally, suppose $\X_t=(V_t,L_t,D_t)$, then $S_{t+1}(\pi,\X)=S_{t}(\pi,\X)-1$ if either $V_t<\pi_t(S_t,L_t)$ or $D_t<S_t$,
and otherwise $S_{t+1}(\pi,\X)=S_t(\pi,\X)+L_t-1$.
Furthermore, let $\rev_t(\pi,\X)$ be equal to 0 in the first case above (the $t$-th job is not scheduled), and  $\pi_t(S_t,L_t)$ otherwise. 
Thus, $S_t(\pi,\X)$ and $\rev_t(\pi,\X)$ are functions of the random values $\X_1,\,\dotsc,\,X_t$ for $\pi$ fixed. Note that $\rev_t$ implicitly depends on $S_t$.
Let $\X_{>i}:=(\X_{i+1},\X_{i+2}, \ldots)$ and $\X_{\leq i}:=(\X_1, \ldots \X_i)$.
Recalling that $\crev_T(\X,\pi)=\sum_{t=1}^T\rev_t(\X,\pi)$, we have
\begin{subequations}\label{eq:definition-of-revenue-martingale}
\begin{align}
    R_i^\pi &=  \sum_{t=0}^i \rev_t(\pi,\X) + \E{\X_{>i}}{\textstyle\sum_{t=i+1}^T
    \rev_t(\pi,\X)\;\middle|\;S_{i+1}(\pi,\X_{\leq i})}\\
    &= \left(\textstyle\sum_{t=0}^i \rev_t(\pi,\X_{\leq t})\right) + \Upi {i+1}{S_{i+1}(\pi,\X_{\leq i})}
\end{align}\end{subequations}
We wish to show that $\crev(\X,\pi)$ concentrates around its mean.
Since $R_0^\pi$ is the expected revenue due to $\pi$, and $R_T^\pi$ is the (random) revenue observed, 
it suffices to show $|R_0^\pi-R_T^\pi|$ is small,
which we will do by applying Azuma's inequality,
after showing the bounded-differences property.
This gives, see Appendix \ref{app:concentration} for details,
\begin{theorem}
\label{thm:concentration-finite-horizon-opt}
	Let $\X$ be a finite sequence of $T$ jobs sampled $i.i.d.$ from $Q$,
	and let $\pi$ be any monotone policy. Then,
	with probability $1-\delta$, \\[-8pt]
	\[
		\left|\crev_T(\X,\pi) - \E{\X'}{\crev_T(\X',\pi)}\right|
		\leq \VV\cdot \sqrt{2\log(\sfrac 2\delta)T}.
	\]
	in the finite horizon, and in the infinite-horizon-discounted,\\[-8pt]
	\[
		\left|\crev_\infty(\X,\pi) - \E{\X'}{\crev_T(\X',\pi)}\right|
		\leq \VV\cdot \sqrt{2\log(\sfrac 2\delta)/(1-\gamma^2)}.
	\]
	In particular these results hold true for the (approximately) optimal pricing strategies of Theorem \ref{thm:monotone-opt}. 
\end{theorem}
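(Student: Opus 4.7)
The proof plan is to apply the Azuma-Hoeffding inequality to the Doob (exposure) martingale $(R_i^\pi)_{i=0}^T$ defined in equation~\eqref{eq:definition-of-revenue-martingale}. First I would verify that $(R_i^\pi)$ is a martingale adapted to the filtration $\mathcal{F}_i := \sigma(\X_1,\ldots,\X_i)$, which is immediate from the tower property of conditional expectations. Observe that $R_0^\pi = \E{\X'}{\crev_T(\X',\pi)}$ and $R_T^\pi = \crev_T(\X,\pi)$, so concentration of $R_T^\pi$ around $R_0^\pi$ is exactly what the theorem asks for.

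The core of the argument is establishing the bounded-differences inequality $|R_i^\pi - R_{i-1}^\pi| \leq \VV$ almost surely. Using \eqref{eq:definition-of-revenue-martingale}, the martingale difference decomposes as
\[
    R_i^\pi - R_{i-1}^\pi = \rev_i(\pi,\X_{\leq i}) + \Upi{i+1}{S_{i+1}(\pi,\X_{\leq i})} - \Upi{i}{S_i(\pi,\X_{\leq i-1})},
\]
and the Bellman identity $\Upi{i}{s} = \mathbb{E}_{\X_i}[\rev_i + \Upi{i+1}{S_{i+1}} \mid S_i = s]$ identifies this quantity as the centered version (with respect to the law of $\X_i$) of the random variable $Y(\X_i) := \rev_i(\pi,\X_{\leq i}) + \Upi{i+1}{S_{i+1}(\pi,\X_{\leq i})}$. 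It therefore suffices to bound the range of $Y$ by $\VV$. The plan is to argue that the single-step revenue collected when the job buys (at most $\VV$) exactly offsets the decrease in future expected revenue caused by the resulting increase in the server state---a cancellation stemming from (a) monotonicity of $\Upi{t}{s}$ in $s$, extending Lemma~\ref{lem:Monot-State-U} from the optimal policy to an arbitrary monotone $\pi$ via a state-coupling argument, and (b) the fact that any individual price $\pi_i(s,\ell)$ is at most $\VV$.

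Once bounded differences are in hand, Azuma-Hoeffding gives $\mathbb{P}(|R_T^\pi - R_0^\pi| > x) \leq 2 \exp(-x^2/(2T\VV^2))$, and inverting in $\delta$ yields the finite-horizon bound. For the infinite-horizon discounted setting, the revenue at time $t$ contributes $\gamma^t \rev_t$ to $\crev_\infty$, so the corresponding martingale increments are bounded by $\VV \gamma^i$; the sum $\sum_{i \geq 0} (\VV \gamma^i)^2 = \VV^2/(1-\gamma^2)$ replaces $T\VV^2$ in Azuma's denominator, producing the second stated bound. Since the (approximately) optimal policies of Theorem~\ref{thm:monotone-opt} are themselves monotone, both concentration statements apply to them without change.

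The hard part will be the sharp $\VV$ bounded-differences step. Pathwise, a single perturbation of $\X_i$ can cascade through the state dynamics and alter scheduling decisions for many subsequent jobs, so a naive union bound over the affected time-steps gives only $\LL \VV$ rather than $\VV$. The point is that the quantity being bounded is a difference of \emph{conditional expectations} of future revenue, and this does contract to $\VV$ via the Bellman cancellation sketched above---but that cancellation requires monotonicity of $\Upi{i+1}{\cdot}$ in the state for arbitrary monotone policies (not only the optimum handled in Lemma~\ref{lem:Monot-State-U}), which must be argued separately and is the technical crux of the proof.
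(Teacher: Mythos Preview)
Your overall strategy—apply Azuma--Hoeffding to the Doob martingale \eqref{eq:definition-of-revenue-martingale}, establish $|R_i^\pi-R_{i-1}^\pi|\le\VV$ (resp.\ $\gamma^i\VV$), and invert—is exactly the paper's. The only divergence is in how the bounded-differences step is argued. The paper dispatches it in a single line: after expanding $R_{t+1}^\pi-R_t^\pi$ from \eqref{eq:definition-of-revenue-martingale}, it asserts the equality
\[
R_{t+1}^\pi-R_t^\pi \;=\; \rev_{t+1}(\pi,\X)-\E{\X_{t+1}}{\rev_{t+1}(\pi,\X)\mid S_{t+1}},
\]
and then bounds the right-hand side by $\VV$ since $\rev_{t+1}\in[0,\VV]$. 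In particular the paper invokes neither monotonicity of $s\mapsto\Upi{t}{s}$ for general $\pi$ nor any coupling argument; the machinery you propose (extending Lemma~\ref{lem:Monot-State-U} to arbitrary monotone $\pi$, then bounding the range of $Y(\X_i)=\rev_i+\Upi{i+1}{S_{i+1}}$) goes beyond what the paper actually writes.

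That said, your instinct that this step deserves more care is sound: the paper's displayed equality silently drops the term $\Upi{t+2}{S_{t+2}}-\E{\X_{t+1}}{\Upi{t+2}{S_{t+2}}\mid S_{t+1}}$, which is generally nonzero because $S_{t+2}$ depends on $\X_{t+1}$. So while your plan and the paper's coincide at the level of the martingale argument, the paper's bounded-differences justification is at best incomplete, and the additional ingredients you identify are a reasonable attempt to close that gap. Be aware, though, that monotonicity of $\Upi{i+1}{\cdot}$ together with $\pi\le\VV$ only gives the \emph{upper} half of the range bound on $Y$, namely $Y\le\Upi{i+1}{S_i-1}+\VV$; controlling the lower half will require a further Lipschitz-type estimate on $s\mapsto\Upi{t}{s}$, not monotonicity alone.
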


\section{Robustness of Pricing with Approximate Distributional Knowledge}\label{sec:policy-with-learning}
In this section,  we show that results analogous to Theorems \ref{thm:monotone-opt} and \ref{thm:concentration-finite-horizon-opt} may be obtained even in the case in which we do not have full knowledge of the distribution $Q$, but only an estimate $\hat Q$. We then show how to obtain a valid $\hat Q$ from samples.

\subsection{Robustness of the pricing strategy}
\label{sec:robustness}
Let's suppose that instead of knowing the exact distribution $Q=(D,L,V)$ of the jobs, we have only access to some estimate $\hat Q=( \hat D, \hat L, \hat V)$ with the following property, for some $\varepsilon >0 $:
\begin{equation}
    \label{eq:robustness}
    \left| \mathbb{P}(\hat L = \ell,\hat V\geq v, \hat D\geq s ) - \mathbb{P}(L=\ell, V\geq v, D\geq s)\right| <  \varepsilon \ \ \forall s \in S, \ell \in \L \text{ and }v \in \V.
\end{equation}
For the sake of brevity, we abuse notation and denote the condition in $(\ref{eq:robustness})$ as $|Q - \hat Q| < \varepsilon.$
Later, we will need to estimate the value $\mathbb{P}[L=\ell, \neg(V\geq v, D\geq s)]$, given $\hat Q$, that is the probability that the job has length $\ell$, but either cannot afford price $v$, or cannot be scheduled $s$ slots in the future.
This is equal to $
    \mathbb{P}[L=\ell]-\mathbb{P}[L=\ell,V\geq v,D\geq s]\ $.

The left-hand term is equal to $\mathbb{P}[{L=\ell,}{V\geq0,}{D\geq 0]}$, 
and so we have access to both terms.
The estimation error is additive, so the deviation is at most~$2\varepsilon$.\\
Denote $\bm p^\ell_{t,s}:=\mathbb{P}[V\geq \pi^t(s,\ell),D\geq s|L=\ell]$, and recall
\begin{equation}\Upi t s :=\sum_{\ell\in\L}\mathbb{P}[L=\ell]\Big(\bm p^\ell_{t,s}\big(\pi_t(s,\ell)+\Upi{t+1}{s+\ell-1}\big)+(1-\bm p^\ell_{t,s})\Upi{t+1}{s-1}\Big),
\label{eq:U-def}
\end{equation}
the expected revenue from time $t$ onwards, conditioning on $S_t=s$.
Let $\hUpi t \cdot$ be the same as $\Upi t \cdot$, but where the variables are distributed as $\hat Q$. 
As before, let $\Uopt t \cdot$ be $\Upi t \cdot$ for $\pi=\pi^*$, the Bayes-optimal policy returned by Algorithm~\ref{alg:finite_horizon}, and $\hUopt t \cdot$ defined similarly 
but with respect to $\hat Q$.
We will show that $\hUopt t \cdot$ is a good estimate for $\Uopt t \cdot$.
\begin{lemma}\label{lem:U-error}
    Let $Q$, and $\hat Q$ such that $|Q - \hat Q| < \varepsilon$.
    \begin{enumerate}
        \item In the finite horizon, $|\Uopt t s - \hUopt t s|<2(T-t)\mathbb V \mathbb L \varepsilon$ for all $t,s$;
        \item In the infinite horizon, $|\Uopt {} s - \hUopt {} s|<2\mathbb V \mathbb L \varepsilon / (1-\gamma)$ for all $s$, where $U^*$ is the optimal time independent strategy.
    \end{enumerate}
\end{lemma}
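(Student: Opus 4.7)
The plan is to prove both parts by first reducing the problem to bounding $|\Upi t s - \hUpi t s|$ for an arbitrary fixed monotone policy $\pi$, and then doing induction (finite horizon) or a contraction argument (infinite horizon).

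First, I would reduce to arbitrary policies via a standard two-sided optimality sandwich. By optimality of $\pi^*$ for $Q$ and $\hat\pi^*$ for $\hat Q$, we have $\Upi t s{}^{\hat\pi^*} \le \Uopt t s$ and $\hUpi t s{}^{\pi^*} \le \hUopt t s$, which chain together to yield
\[
U^{\hat\pi^*}_t(s) - \hat U^{\hat\pi^*}_t(s) \;\le\; \Uopt t s - \hUopt t s \;\le\; U^{\pi^*}_t(s) - \hat U^{\pi^*}_t(s).
\]
Hence $|\Uopt t s - \hUopt t s| \le \max_{\pi\in\{\pi^*,\hat\pi^*\}} |\Upi t s - \hUpi t s|$, so it suffices to bound $|\Upi t s - \hUpi t s|$ for arbitrary monotone $\pi$.

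Second, for the finite horizon I would induct on $k := T-t$. Starting from \eqref{eq:U-def}, I would subtract the two recurrences, add and subtract the mixed term $\sum_\ell \hat a_\ell U^\pi_{t+1}(s+\ell-1)$ (and similarly for the $1-\hat A$ branch), and collect the terms into three groups: (i) the immediate price mismatch $\sum_\ell (a_\ell-\hat a_\ell)\pi_t(s,\ell)$, (ii) the transition-kernel mismatch, which after using $\hat A - A = \sum_\ell(\hat a_\ell-a_\ell)$ consolidates to $\sum_\ell (a_\ell-\hat a_\ell)\bigl(U^\pi_{t+1}(s+\ell-1)-U^\pi_{t+1}(s-1)\bigr)$, and (iii) the future-value mismatch $\sum_\ell \hat a_\ell(U^\pi_{t+1}(s+\ell-1)-\hat U^\pi_{t+1}(s+\ell-1)) + (1-\hat A)(U^\pi_{t+1}(s-1)-\hat U^\pi_{t+1}(s-1))$. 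Term (iii) is a convex combination in the $\hat Q$-transition kernel and is thus bounded in absolute value by $\max_{s'} |U^\pi_{t+1}(s') - \hat U^\pi_{t+1}(s')|$, i.e., the inductive hypothesis. Term (i) is bounded by $\mathbb L\mathbb V\varepsilon$ via $|a_\ell-\hat a_\ell|\le\varepsilon$, $\pi_t(s,\ell)\le \mathbb V$, and $|\L|\le\mathbb L$. Term (ii) is controlled by a ``delay-cost'' inequality: combining Lemma~\ref{lem:Monot-State-U} with the observation that starting $\ell$ time steps later loses at most $\ell\mathbb V$ of revenue, we get $|U^\pi_{t+1}(s+\ell-1)-U^\pi_{t+1}(s-1)|\le \ell\mathbb V \le \mathbb L\mathbb V$, so term (ii) is also $O(\mathbb L\mathbb V\varepsilon)$. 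Adding these and induction close the linear-in-$(T-t)$ bound, with base case $U^\pi_T\equiv\hat U^\pi_T\equiv 0$.

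Third, for the infinite horizon statement the same decomposition applies with the discount factor $\gamma$ multiplying all future-value terms. I would reformulate the recurrence as a Bellman operator $\mathcal B$ (resp.\ $\hat{\mathcal B}$ for $\hat Q$), which is a $\gamma$-contraction in $\|\cdot\|_\infty$. Then $\|U^*-\hat U^*\|_\infty \le \|\mathcal B\hat U^* - \hat{\mathcal B}\hat U^*\|_\infty/(1-\gamma)$, and the single-step operator mismatch is bounded by $2\mathbb V\mathbb L\varepsilon$ using the same three-term decomposition as above, together with the uniform bound $\|\hat U^*\|_\infty\le \mathbb V/(1-\gamma)$ absorbed by the delay-cost inequality.

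The main obstacle is term (ii): if one naively bounds $|U^\pi_{t+1}(s+\ell-1)-U^\pi_{t+1}(s-1)|$ by $\|U^\pi_{t+1}\|_\infty\le(T-t-1)\mathbb V$, one gets a quadratic $(T-t)^2$ blowup rather than the claimed linear dependence. The key insight that avoids this is the state-Lipschitz-type inequality $|U^\pi_{t+1}(s')-U^\pi_{t+1}(s)|\le|s-s'|\mathbb V$, which is derived from the monotonicity result in Lemma~\ref{lem:Monot-State-U} together with the fact that a unit delay costs at most one step's worth of revenue.
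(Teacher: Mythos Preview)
Your overall plan---decompose the one-step error into a transition-kernel mismatch and a future-value mismatch, then induct (or contract)---is the same skeleton the paper uses. The gap is entirely in how you handle term~(ii).

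The ``delay-cost'' inequality $|U^\pi_{t+1}(s+\ell-1)-U^\pi_{t+1}(s-1)|\le\ell\VV$ is not justified by what you wrote. Lemma~\ref{lem:Monot-State-U} is stated and proved \emph{only} for the optimal value $U^*$, not for $U^\pi$ with an arbitrary monotone~$\pi$; for a policy that prices very differently across neighboring states there is no reason $U^\pi_t(\cdot)$ should even be monotone in~$s$. Your companion observation that ``starting $\ell$ time steps later loses at most $\ell\VV$'' conflates state with time: being in state $s+\ell$ at time $t$ is not the same as being in state $s$ at time $t+\ell$ when the policy is time-dependent, so the coupling you have in mind does not go through. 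And even after your sandwich reduction you still need the bound for $U^{\hat\pi^*}_{t+1}$---the $Q$-value of the $\hat Q$-optimal policy---to which Lemma~\ref{lem:Monot-State-U} does not apply. The same issue recurs in your infinite-horizon contraction step, where the one-step operator mismatch again rests on this unproved Lipschitz bound.

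The paper avoids this entirely by \emph{not} separating your terms (i) and (ii). Working with the fixed optimal policy $\pi^*$, it bounds the combined quantity $\pi^*_t(s,\ell)+U^*_{t+1}(s+\ell-1)-U^*_{t+1}(s-1)$ directly: Lemma~\ref{lem:Monot-State-U} gives the upper bound $\le\VV$ (since the $U^*$-difference is nonpositive and the price is at most $\VV$), and optimality of $\pi^*$ gives the lower bound $\ge 0$ (otherwise pricing at $+\infty$ would be better). Plugging $|x-y|\le\VV$ into the elementary inequality $|(\lambda x+(1-\lambda)y)-(\hat\lambda\hat x+(1-\hat\lambda)\hat y)|\le|\lambda-\hat\lambda|\,|x-y|+\delta$ yields the per-step error $2\LL\VV\varepsilon$ with no state-Lipschitz argument at all. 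Your decomposition can be repaired along exactly these lines---but only by exploiting optimality of the specific policy appearing in term~(ii), not via a bound for general~$\pi$.
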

The proof of 1 is in Appendix \ref{app:robustness}, and the proof of 2 in Appendix \ref{sec:Infinite-Horizon-Approx}. 

\subsection{Learning the Underlying Distribution from Samples}\label{subsec:learning-hat-Q}
As discussed above, we show here how to compute a $\hat Q$ from samples of $Q$, such that $|Q-\hat Q|$ is small with high probability. 
In particular we present a sampling procedure which respects the rules of the pricing server mechanism. When a job arrives, we only learn its length, and only if it agrees to be scheduled.
Thus, we are not given full samples of $Q$, complicating the learning procedure. 
Thanks to the previous section, we know that a policy which is optimal with respect to $\hat Q$ will be close-to-optimal with respect to $Q$.\\
We remark, however, that the power of the results of the previous section is not exhausted by this application: one may apply directly the robustness results to specific problems in which the $\hat Q$ is subject to (small) noise or an approximate distribution is already known from other sources. 
\\
Let $\mathcal X=\{(L_1,V_1,D_1),\,\dotsc,$ $\,(L_n,V_n,D_n),\}$ be an {\em i.i.d.} sample of $n$ jobs from the underlying distribution $Q$. Note that the expectation of an indicator is the probability of the indicated event. Fix a length $\ell$, a state $s$, and a value $v$. 
As a consequence of H\"offding's inequality, with probability $1-\delta$, 
\begin{equation}
\Big|	\tfrac 1n\textstyle\sum_{k=1}^n  \mathbb I[L_k=\ell, V_k\geq v, D_k\geq s]
	-\mathbb{P}[L=\ell, V\geq v, D\geq s]\Big| \leq \sqrt{\tfrac{\log(\sfrac2\delta)}{2n}}\label{eq:sample-estimator}
\end{equation}

\subparagraph*{Sampling Procedure.}
We wish to estimate the value $\mathbb{P}[L=\ell,V\geq v,D\geq s]$ for all choices of $\ell$, $v$, and $s$.
Fixing $v$ and $s$, we may repeatedly post prices $\pi_t(s,\ell)=v$ and declare that the earliest available time is $s$, then record 
(i)~which job accepts to be scheduled, 
and (ii)~the length of each scheduled job.
Let $\perr>0$ and $n\geq \log(2/\delta) / (2\perr^2)$, then by~\eqref{eq:sample-estimator}, the sample-average of each value will have error at most $\perr$ with probability $1-\delta$, for any one choice of $(\ell,v,s)$.

Repeating this process for all $\leq \KK^2$ choices of $v\in\V$ and $s\in\SS$ gives us estimates for each. 
Now, if we want to have the estimate hold over all choices of $\ell,v,s$, it suffices to take the union bound over all $ \leq \KK^3$ values (incl. $\ell$), and scaling accordingly.
If we take $n\geq 3\log(2\KK/\delta)/(2\perr^2)$ samples for each of the $\leq \KK^2$ choices of $v$ and $s$, 
then simultaneously for all $\ell$, $v$, and $s$, the
quantity in~\eqref{eq:sample-estimator} is at most $\perr$. So we have obtained the ``$|Q-\hat Q|<\perr$'' condition. It should be noted that, for this sampling procedure, if a job of length $\ell$ is scheduled, we must possibly wait at most $\ell$ times units before taking the next sample to clear the buffer. This blows up the sampling {\em time} by a factor of $O(\LL)$. The following result follows immediately from Lemma \ref{lem:U-error} and H\"offding's inequality for the right choice of $n$.  
\begin{lemma}\label{lem:U-error-samples}
    Let $n$, $Q$, and $\hat Q$,
    be as above.
	In the finite horizon, for all $\varepsilon>0$, if $n\geq 6T\KK^4\log(2\KK/\delta)/\varepsilon^2$, we have that with probability $1-\delta$, $|\Uopt t s - \hUopt t s|<\varepsilon$ for all $t,s$.
	In the infinite horizon, if $n\geq 6\KK^4\log(2\KK/\delta)/((1-\gamma)\varepsilon^2)$, we have that with probability $1-\delta$, $|\Uopt {} s - \hUopt {} s|<\varepsilon$ for all $s$.
\end{lemma}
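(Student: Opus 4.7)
The plan is to derive this lemma as a direct corollary of Lemma~\ref{lem:U-error} composed with the Hoeffding-plus-union-bound argument already set up in the sampling discussion above. Concretely, I would choose an intermediate distributional-accuracy target $\perr$ so that the $U$-gap predicted by Lemma~\ref{lem:U-error} is at most $\varepsilon$, and then invoke the sampling argument to certify that the estimate $\hat Q$ satisfies $|Q - \hat Q| < \perr$ with probability $\geq 1-\delta$ for the claimed choice of $n$.

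For the finite horizon, Lemma~\ref{lem:U-error}(1) gives that $|Q - \hat Q| < \perr$ implies $|\Uopt t s - \hUopt t s| \leq 2T\VV\LL\,\perr \leq 2T\KK^2\perr$ for all $t, s$. I would therefore take $\perr$ proportional to $\varepsilon/(T\KK^2)$. The sampling procedure, for each of the $\leq \KK^2$ pairs $(v, s)$, posts the menu $\pi_t(s,\cdot) \equiv v$ and records $n$ i.i.d.\ agent responses; each response, together with the reported length on scheduled jobs, jointly reveals whether the event $\{L = \ell, V \geq v, D \geq s\}$ occurred. Hoeffding's inequality~\eqref{eq:sample-estimator} together with a union bound over the $\leq \KK^3$ triples $(\ell, v, s)$ guarantees that, as soon as $n \geq 3\log(2\KK/\delta)/(2\perr^2)$, all of the corresponding empirical frequencies lie within $\perr$ of their true values simultaneously with probability $\geq 1-\delta$, which is exactly the condition $|Q - \hat Q| < \perr$. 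Substituting the chosen $\perr$ into this sample-count condition produces the claimed polynomial-in-$T$, $\KK$ bound.

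The infinite-horizon case is structurally identical, except that Lemma~\ref{lem:U-error}(2) gives $|\Uopt{}{s} - \hUopt{}{s}| \leq 2\VV\LL\,\perr/(1-\gamma)$, so I would instead scale $\perr$ proportionally to $\varepsilon(1-\gamma)/\KK^2$; the same Hoeffding-plus-union-bound step then yields the infinite-horizon sample bound with the additional $1/(1-\gamma)$ factor.

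The main obstacle, such as it is, is purely bookkeeping: verifying that the chain of error bounds $(\delta,\perr,\varepsilon,n)$ assembles to exactly the stated sample complexity, and checking that the sampling procedure really produces valid i.i.d.\ Bernoulli observations of $\{L = \ell, V \geq v, D \geq s\}$ for each fixed $(v, s)$. The latter holds because, while the menu posting $v$ at state $s$ is in effect, each arriving job is an independent draw from $Q$ whose accept/reject decision and reported length jointly determine the indicator deterministically. There is no conceptual novelty here beyond what Lemma~\ref{lem:U-error} already provides: all the real work is in the robustness statement and in the validity of the sampling protocol.
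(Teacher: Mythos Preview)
Your proposal is correct and matches the paper's approach exactly: the paper states only that the lemma ``follows immediately from Lemma~\ref{lem:U-error} and H\"offding's inequality for the right choice of $n$,'' which is precisely the Hoeffding-plus-union-bound argument composed with the robustness bound that you outline. Your write-up is in fact more detailed than the paper's own one-line justification, and your flagging of the constant bookkeeping as the only remaining work is apt.
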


\subsection{Performance of the Computed Policy}
\label{sec:performance}
We use here the result of the previous sections to analyze the performance of the policy output by Algorithm~\ref{alg:finite_horizon} after the learning procedure. 
By the estimation of revenue, the best policy in estimated-expectation is near-optimal in expectation. 
Since revenues from arbitrary policies concentrate, we get near-optimal revenue in hindsight.\\
Formally, for $\varepsilon>0$, Lemma~\ref{lem:U-error-samples} gives us that if the sample-distribution $\hat Q$ is computed on $n\geq 6T\KK^4\log(2\KK/\delta)/\varepsilon^2$ samples, then with probability $1-\delta$ over the samples, $|\Uopt t s - \hUopt t s|\leq\varepsilon$. 
Note that $\Uopt{t=0}{s=0}$ is exactly the expected cumulative revenue of the optimal policy.
For clarity of notation, denote
\begin{equation}
    \ecrev_T(\pi|Q):= \E{\X\sim Q}{\crev_T(\X,\pi)}
\end{equation}
We have shown that for sufficient samples, $|\ecrev_T(\pi^*|Q)-\ecrev_T(\pi^*|\hat Q)|<\varepsilon$, with probability $1-\delta$.
This observation allows us to then conclude

\begin{theorem}[Finite Horizon]
\label{thm:summary}
    Let $Q$ be the underlying distribution over jobs.
    Let $\varepsilon>0$, and $n\geq 24T\KK^4\log(8\KK/\delta)/\varepsilon^2$.
    Then in time $O(T\KK^3+n\LL)$, we may compute a policy $\hat \pi$ which is monotone in length, and therefore incentive compatible, such that for any policy $\pi$, with probability $(1-\delta)$,
    \[
        \crev_T(\X,\hat \pi)\geq \crev_T(\X,\pi) - 2\VV\sqrt{2\log(\sfrac 8\delta)(T+1)} - \varepsilon
    \]
    Furthermore, if the distribution over values $V$ is continuous rather than discrete, we may compute in time $O(T\KK^2\VV/\eta+n\LL)$ a monotone policy $\hat \pi$ such that for any policy $\pi$, with probability $1-\delta$,
    \[
        \crev_T(\X,\hat \pi)\geq \crev_T(\X,\pi) - 2\VV\sqrt{2\log(\sfrac 8\delta)(T+1)} - \varepsilon - \eta T
    \]
\end{theorem}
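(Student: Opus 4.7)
The plan combines the sample-based near-optimality of $\hat\pi$ in expectation (from Lemmas~\ref{lem:U-error} and~\ref{lem:U-error-samples}) with the hindsight revenue concentration of Theorem~\ref{thm:concentration-finite-horizon-opt}, budgeting the failure mass $\delta$ as $\delta/2$ for the sampling event and $\delta/4$ each for two applications of concentration (one for $\hat\pi$, one for the competitor $\pi$).

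First I would draw $n$ samples via the procedure of Section~\ref{subsec:learning-hat-Q} to build $\hat Q$; this takes $O(n\LL)$ time, since each scheduled job blocks the server for up to $\LL$ slots before the next sample can be taken. Running Algorithm~\ref{alg:finite_horizon} on $\hat Q$ then produces $\hat\pi$ in time $O(T\KK^3)$, and $\hat\pi$ is monotone (hence truthful) by Theorem~\ref{thm:monotone-opt} applied to $\hat Q$. With the stated $n$, Lemma~\ref{lem:U-error-samples} at accuracy $\varepsilon/4$ and confidence $\delta/2$ gives $|U^\pi_0(0)-\hat U^\pi_0(0)|\leq\varepsilon/4$ simultaneously for $\pi\in\{\pi^*,\hat\pi\}$; although stated for optimal policies, the lemma rests on the telescoping recursion~(\ref{eq:U-def}), which is policy-agnostic. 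Because $\hat\pi$ maximizes $\hat U^{\,\cdot}_0(0)$ and $\pi^*$ maximizes $U^{\,\cdot}_0(0)$, chaining yields
\begin{equation*}
    U^{\hat\pi}_0(0)\geq \hat U^{\hat\pi}_0(0)-\tfrac{\varepsilon}{4}\geq \hat U^{\pi^*}_0(0)-\tfrac{\varepsilon}{4}\geq U^{\pi^*}_0(0)-\tfrac{\varepsilon}{2}\geq U^\pi_0(0)-\tfrac{\varepsilon}{2},
\end{equation*}
so $\ecrev_T(\hat\pi|Q)\geq \ecrev_T(\pi|Q)-\varepsilon/2$ for every policy $\pi$.

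Next I would invoke Theorem~\ref{thm:concentration-finite-horizon-opt} separately for $\hat\pi$ and for $\pi$, each with failure probability $\delta/4$, and union-bound with the sampling event to get, with probability $\geq 1-\delta$,
\begin{equation*}
    \crev_T(\X,\hat\pi)\geq \ecrev_T(\hat\pi|Q)-\VV\sqrt{2\log(\sfrac{8}{\delta})(T+1)},\quad \ecrev_T(\pi|Q)\geq \crev_T(\X,\pi)-\VV\sqrt{2\log(\sfrac{8}{\delta})(T+1)}.
\end{equation*}
Chaining these with the in-expectation inequality from the previous step, and rescaling $n$ by a constant factor to absorb the extra $\varepsilon/2$ into $\varepsilon$, gives the first claim. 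The continuous-$V$ case substitutes the grid-approximation scheme of Theorem~\ref{thm:monotone-opt}(3) for Algorithm~\ref{alg:finite_horizon}; this contributes an additional $\eta T$ additive gap between $\hat U^{\hat\pi}_0(0)$ and $\hat U^{*}_0(0)$ that propagates unchanged into the final bound, and the runtime becomes $O(T\KK^2\VV/\eta+n\LL)$.

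The main obstacle is extending Lemmas~\ref{lem:U-error} and~\ref{lem:U-error-samples} from the optimal pair $(U^*,\hat U^*)$ to the arbitrary policy $\hat\pi$: these lemmas as stated compare two \emph{different} optimal policies, whereas the chaining above needs the robustness bound $|U^{\hat\pi}_0(0)-\hat U^{\hat\pi}_0(0)|\leq\varepsilon/4$ evaluated at the particular policy $\hat\pi$. The extension is routine because (\ref{eq:U-def}) holds for any fixed policy, but it must be made explicit to justify the middle inequality in the chain.
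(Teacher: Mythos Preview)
Your proposal is correct and follows essentially the same route as the paper: chain two applications of the robustness bound (Lemma~\ref{lem:U-error-samples}) with two applications of concentration (Theorem~\ref{thm:concentration-finite-horizon-opt}), using the optimality of $\hat\pi$ for $\hat Q$ and of $\pi^*$ for $Q$ in between, and union-bound the failure events. The paper budgets $\delta/4$ for each of four events and $\varepsilon/2$ per robustness step (rather than your $\delta/2,\delta/4,\delta/4$ and $\varepsilon/4$), which is what makes the stated sample bound $n\geq 24T\KK^4\log(8\KK/\delta)/\varepsilon^2$ come out exactly; it also handles the ``main obstacle'' you flag with the same observation, namely that the robustness recursion~\eqref{eq:U-def} is policy-agnostic, so Lemma~\ref{lem:U-error-samples} applies to $\hat\pi$ by viewing $\hat Q$ as the true distribution.
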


\begin{proof}
    We have chosen $n\geq 6T\KK^4\log(2\KK/(\delta/4))/(\varepsilon/2)^2$.
    Let $\pi^*$ be the optimal policy for the true distribution $Q$. 
    By Theorem~\ref{thm:concentration-finite-horizon-opt}, we have $|\crev_T(\X,\pi)-\ecrev_T(\pi|Q)|<\VV\sqrt{2\log(8/\delta)(T+1)}$ with probability $1-\delta/4$ for both $\pi$ and $\hat \pi$. 
    Furthermore, by Lemma~\ref{lem:U-error-samples}, $|\ecrev_T(\pi|Q)-\ecrev_T(\pi|\hat Q)|<\varepsilon/2$ with probability $1-\delta/4$, for both $\pi=\hat \pi$ and $\pi^*$. This is because from the point of view of $\hat \pi$, $\hat Q$ is the true distribution, and $Q$ is the estimate.
    Taking the union bound over all four events above, and recalling that $\hat \pi$ maximizes $\ecrev_T(\pi|\hat Q)$, and $\pi^*$ maximizes $\ecrev_T(\pi|Q)$, we get the following with probability $1-\delta$:
    \begin{align*}
        \crev_T(\X,\hat \pi)&\geq 
        \ecrev_T(\hat \pi|Q)-\VV\sqrt{2\log(8/\delta)(T+1)}&\text{(concentration)}\\
&\geq \ecrev_T(\hat \pi|\hat Q)-\VV\sqrt{2\log(8/\delta)(T+1)}-\varepsilon/2\!\!\!\!&\text{(sample error)}\\
&\geq \ecrev_T(\pi^*|\hat Q)-\VV\sqrt{2\log(8/\delta)(T+1)}-\varepsilon/2\!\!\!\!&\text{(optimality)}\\
&\geq \ecrev_T(\pi^*|Q)-\VV\sqrt{2\log(8/\delta)(T+1)}-\varepsilon\!\!\!\!&\text{(sample error)}\\
        &\geq \ecrev_T(\pi|Q)-\VV\sqrt{2\log(8/\delta)(T+1)}-\varepsilon\!\!\!\!&\text{(optimality)}\\
        &\geq \crev_T(\X,\pi)-2\VV\sqrt{2\log(8/\delta)(T+1)}-\varepsilon\!\!\!\!&\text{(concentration)}
    \end{align*}
    as desired.\\
    When $V$ is continuously distributed, choose prices which are multiples of $\eta$ between 0 and $\VV$, as is outlined in Appendix~\ref{sec:Approx-Grid-Values}.
 \end{proof}

For what concerns the $\gamma$-discounted infinite horizon case, we have the following

\begin{theorem}[Infinite Horizon, Discounted]
\label{thm:summary-infinite}
    Let $Q$ be the underlying distribution over jobs.
    Let $\varepsilon>0$, and $n\geq 24\KK^4\tfrac{\log(8\KK/\delta)}{\varepsilon^2(1-\gamma)}$.
    Then we may compute a policy $\hat \pi$ in time $O\left(\tfrac{\KK^3}{1-\gamma}\ln\left(\tfrac{\VV}{\varepsilon(1-\gamma)}\right)+n\LL\right)$, which is monotone, and thus incentive compatible, such that for any policy $\pi$, with probability $(1-\delta)$,
    \[
        \crev_\infty(\X,\hat \pi)\geq \crev_\infty(\X,\pi) - 2\VV\sqrt{2\log(\sfrac 8\delta)/(1-\gamma^2)} - 2\varepsilon
    \]
    Furthermore, if the distribution over values $V$ is continuous rather than discrete, we may compute in time $O\left(\tfrac{\KK^2\VV/\eta}{1-\gamma}\ln\left(\tfrac{\VV}{\varepsilon(1-\gamma)}\right)+n\LL\right)$ a monotone policy $\hat \pi$ such that for any $\pi$, with probability $1-\delta$,
    \[
        \crev_\infty(\X,\hat \pi)\geq \crev_\infty(\X,\pi) - 2\VV\sqrt{2\log(\sfrac 8\delta)/(1-\gamma^2)} - 2\varepsilon - \eta/(1-\gamma)
    \]
\end{theorem}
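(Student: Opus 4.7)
The plan is to mirror the chain of inequalities used in the proof of Theorem~\ref{thm:summary} for the finite horizon, substituting each ingredient with its $\gamma$-discounted infinite-horizon counterpart. The three building blocks I will use are: (i) the sample-complexity bound of Lemma~\ref{lem:U-error-samples} (infinite horizon part), which guarantees $|\Uopt{}{s}-\hUopt{}{s}|\leq \varepsilon/2$ for all $s$ with probability $1-\delta/4$ as soon as $n\geq 24\KK^4\log(8\KK/\delta)/(\varepsilon^2(1-\gamma))$; (ii) the infinite-horizon concentration inequality of Theorem~\ref{thm:concentration-finite-horizon-opt}, which states $|\crev_\infty(\X,\pi)-\ecrev_\infty(\pi|Q)|\leq \VV\sqrt{2\log(8/\delta)/(1-\gamma^2)}$ with probability $1-\delta/4$ for any monotone $\pi$; and (iii) part~2 of Theorem~\ref{thm:monotone-opt}, which supplies an algorithm computing an $\varepsilon$-additive-approximate monotone policy in time $O\bigl(\tfrac{\KK^3}{1-\gamma}\ln(\tfrac{\VV}{\varepsilon(1-\gamma)})\bigr)$.

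First I would draw $n$ samples via the sampling scheme of \S\ref{subsec:learning-hat-Q}, producing $\hat Q$. The running time of this phase is $O(n\LL)$ because each accepted job may block the server for up to $\LL$ time slots. Then, treating $\hat Q$ as the true distribution, I run the backwards-induction / value-iteration routine from Theorem~\ref{thm:monotone-opt}(2) to obtain a monotone policy $\hat\pi$ satisfying $\hecrev_\infty(\hat\pi|\hat Q) \geq \hecrev_\infty(\pi|\hat Q) - \varepsilon$ for every policy $\pi$, where $\hecrev$ denotes expected discounted revenue under $\hat Q$. The total computation time is then $O\bigl(\tfrac{\KK^3}{1-\gamma}\ln(\tfrac{\VV}{\varepsilon(1-\gamma)}) + n\LL\bigr)$ as claimed.

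Now I chain six inequalities exactly as in the finite-horizon proof. Fix an arbitrary competing monotone policy $\pi$. With probability $1-\delta/4$ each, the following hold for both $\hat\pi$ and $\pi$: concentration gives $\crev_\infty(\X,\hat\pi)\geq \ecrev_\infty(\hat\pi|Q) - \VV\sqrt{2\log(8/\delta)/(1-\gamma^2)}$; sample error (Lemma~\ref{lem:U-error-samples} infinite horizon) gives $\ecrev_\infty(\hat\pi|Q)\geq \ecrev_\infty(\hat\pi|\hat Q)-\varepsilon/2$; near-optimality of $\hat\pi$ w.r.t.\ $\hat Q$ gives $\ecrev_\infty(\hat\pi|\hat Q)\geq \ecrev_\infty(\pi^*|\hat Q)-\varepsilon$ where $\pi^*$ is optimal for $Q$; sample error again gives $\ecrev_\infty(\pi^*|\hat Q)\geq \ecrev_\infty(\pi^*|Q)-\varepsilon/2$; optimality of $\pi^*$ w.r.t.\ $Q$ gives $\ecrev_\infty(\pi^*|Q)\geq \ecrev_\infty(\pi|Q)$; and concentration on $\pi$ gives $\ecrev_\infty(\pi|Q)\geq \crev_\infty(\X,\pi)-\VV\sqrt{2\log(8/\delta)/(1-\gamma^2)}$. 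Summing, the two sample-error terms contribute $\varepsilon$, the approximation term contributes $\varepsilon$, giving the $2\varepsilon$ slack, and the two concentration terms contribute $2\VV\sqrt{2\log(8/\delta)/(1-\gamma^2)}$. A union bound over the four probabilistic events yields total failure probability $\delta$.

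The main obstacle, and the reason the infinite-horizon bound picks up $2\varepsilon$ rather than $\varepsilon$, is that in the infinite-horizon case we no longer have exact optimality of the computed policy within its own MDP — only $\varepsilon$-approximate optimality from value iteration's geometric contraction. I need to be careful that this extra slack is added to, rather than replacing, the $\varepsilon/2$ coming from the two applications of Lemma~\ref{lem:U-error-samples}; this forces the sample-size constant to be calibrated so that $|\Uopt{}{s}-\hUopt{}{s}|\leq \varepsilon/2$ (hence the factor $24$ in $n$). For the continuous-value extension, I would replace the policy-computation step by the grid-based scheme from Appendix~\ref{sec:Approx-Grid-Values}, which introduces an additional $\eta/(1-\gamma)$ additive error per step under discounting (summing the per-period $\eta$ loss against the geometric weights $\gamma^t$), producing the final stated bound.
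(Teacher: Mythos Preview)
Your proposal is correct and follows essentially the same approach as the paper: the paper's own proof of Theorem~\ref{thm:summary-infinite} simply points back to the finite-horizon chain of inequalities in Theorem~\ref{thm:summary} and substitutes the infinite-horizon versions of the concentration bound (Theorem~\ref{thm:concentration-finite-horizon-opt}), the sample-error bound (Lemma~\ref{lem:U-error-samples}), and the value-iteration computation (Theorem~\ref{thm:monotone-opt}(2) via Lemma~\ref{lemma:infinite_horizon}). Your identification of the extra additive $\varepsilon$ as coming from the loss of exact optimality under value iteration, and of the $\eta/(1-\gamma)$ term from the discounted grid corollary in Appendix~\ref{sec:Approx-Grid-Values}, matches the paper's reasoning exactly.
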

As above, this policy $\hat \pi$ is computed by learning $\hat Q$ from $n$ samples as in Section~\ref{subsec:learning-hat-Q}, then running the modified Algorithm~\ref{alg:finite_horizon} for the estimated distribution as in Appendix \ref{sec:Infinite-Horizon-Approx}. In case $V$ is continuously distributed, we restrict ourselves to prices which are multiples of $\eta$ between 0 and $\VV$. The details of the proof are in Appendix \ref{app:extensions}.\\
We recall that all these results need the distribution assumption from Section~\ref{sec:Monotone-Opt}.

 \section{Conclusions and Future Work}
 \label{sec:conclusion}

In summary, we propose to price time on a server by first learning the distribution over jobs from samples, then computing the Bayes-optimal policy from the estimated distribution.
Our learning algorithm is simple: we sample the distribution through the observation of $n$ jobs at artificially fixed prices and server-states, and learn the job parameters depending on whether they accept to be scheduled.
Using these observations, we build an observed distribution $\hat{Q}$.  
 We then run Algorithm \ref{alg:finite_horizon} with $\hat Q$ and compute an optimal policy $\hat{\pi}$ for $\hat{Q}$. We are guaranteed that the policy prices monotonically (due to Lemma~\ref{lem:Monot-State-U}), and therefore it is incentive compatible, which implies the correctness of the estimated revenue.

 \subparagraph*{Future Work.} There are many natural extensions to this work. For example, one could consider a multi-server setting, settings where jobs can request to be scheduled later than the earliest available time, or settings where jobs need various quantities of differing resources, such as memory and computation time.

\bibliographystyle{unsrt}  

\bibliography{references}

\appendix

\section{Log-Concave Distributions}\label{app:log-concave}
In Section~\ref{sec:Monotone-Opt}, we sought to show that if the value of a random job has a log-concave distribution, then the optimal policy will be monotone. 
We present here a discussion of log-concavity, both for continuous and discrete random variables, and give the proof of the monotonicity of the prices. 

Formally, a function $f:\R\to\R$ is log-concave if for any $x$ and $y$, and for any $0\leq \theta\leq 1$, $\lg f(\theta x + (1-\theta)y)\geq \theta \lg  f(x) + (1-\theta)\lg f(y)$. Equivalently, $f(\theta x + (1-\theta)y)\geq f(x)^\theta f(y)^{1-\theta}$. 
For a discretely supported $f:\mathbb Z\to \R$, we replace this condition with $f(x)^2 \geq f(x-1)f(x+1)$, emulating the continuous definition with $\theta=\tfrac12$.
We further require that the support of $f$ be ``connected''.

\begin{definition}
    A continuous random variable $X$ with density function $f$ is said to be log-concave if $f$ is log-concave. 
    A discrete random variable $Y$ with probability mass function $p$ is said to be log-concave if $p$ is discretely log-concave.
\end{definition}
A well-known fact is that log-concave random variables also have log-concave cumulative density/mass functions. 
We present here a quick proof of this fact, for completeness. 
\begin{claim}
    If $X$ is a log-concave continuous r.v., then $\mathbb{P}[X\leq x]$, and ${\mathbb{P}[X\geq x]}$ are log-concave functions of $x$.
    If $Y$ is a log-concave discrete r.v. supported on~$\mathbb N$, then $\mathbb{P}[Y\leq y]$ and $\mathbb{P}[Y\geq y]$ are discretely log-concave functions of $y$.
\end{claim}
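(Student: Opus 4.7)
The plan is to handle the continuous and discrete cases by different techniques. For the continuous case, I will appeal to the Pr\'ekopa--Leindler inequality, which states that the convolution of two log-concave functions on $\R$ is log-concave. Writing
\[
    F(x) = \mathbb P[X\leq x] = \int_\R f(t)\,\mathbf 1[t-x\leq 0]\,dt = (f \ast \mathbf 1_{(-\infty,0]})(x),
\]
I observe that both $f$ (by assumption) and the indicator of the convex halfline $(-\infty,0]$ (taking $\log 0 = -\infty$) are log-concave, so $F$ is log-concave as a convolution of log-concave functions. The survival function $\bar F(x) = \mathbb P[X\geq x]$ reduces to the CDF case via $X \mapsto -X$: the reflected density $f(-t)$ is still log-concave, and log-concavity of a function is preserved under affine reparametrization of its input, so $\bar F$ is log-concave in $x$.

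For the discrete case, I will verify the three-term log-concavity inequality directly. Writing $P(k) = \sum_{j\leq k} p(j)$, a short algebraic rearrangement yields the identity
\[
    P(k)^2 - P(k-1)\,P(k+1) \;=\; P(k)\,p(k) - P(k-1)\,p(k+1),
\]
so the task reduces to proving $P(k)\,p(k) \geq P(k-1)\,p(k+1)$, equivalently $p(k)/P(k-1) \geq p(k+1)/p(k) - 1$. By discrete log-concavity of $p$, the ratios $r_j := p(j)/p(j-1)$ are non-increasing in $j$. I will split on whether $r_{k+1} \leq 1$ or $r_{k+1} > 1$. When $r_{k+1}\leq 1$ the right-hand side is non-positive and the inequality is immediate. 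When $r_{k+1} > 1$, monotonicity of the ratios gives $p(j) \leq p(k)\,r_{k+1}^{-(k-j)}$ for every $j \leq k$, so that $P(k-1) \leq p(k)\sum_{m\geq 1} r_{k+1}^{-m} = p(k)/(r_{k+1}-1)$, which rearranges to $P(k-1)\,r_{k+1} \leq P(k)$, i.e.\ the desired bound. The upper tail $\bar P(k) = \mathbb P[Y\geq k]$ is handled by an analogous direct computation, or (for finitely supported $Y$) by the reflection $Y \mapsto M - Y$.

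The main obstacle is the discrete case: the discrete analogue of Pr\'ekopa--Leindler is less immediately off-the-shelf, so one has to commit to a direct combinatorial argument. The case split on the sign of $r_{k+1}-1$ is essential, since otherwise the geometric tail sum that bounds $P(k-1)$ degenerates. Once the identity for $P(k)^2 - P(k-1)P(k+1)$ is isolated, the remainder is bookkeeping that exploits monotonicity of the consecutive ratios $r_j$; the continuous case, by contrast, is essentially a one-line consequence of Pr\'ekopa--Leindler once the CDF is recognized as a convolution with the indicator of a halfline.
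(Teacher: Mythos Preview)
Your argument is correct in both cases, and the overall strategy is sound. One cosmetic slip: the convolution $(f * \mathbf 1_{(-\infty,0]})(x)$ with the standard definition $\int f(t)g(x-t)\,dt$ actually equals $\bar F(x)$, not $F(x)$; the indicator you want for the CDF is $\mathbf 1_{[0,\infty)}$. This is harmless, since both halfline indicators are log-concave, so the Pr\'ekopa argument goes through either way.

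For the continuous part the paper simply cites the literature, so your explicit Pr\'ekopa--Leindler argument is a fine substitute. For the discrete part, the paper takes a shorter, case-free route: from the three-term inequality it first derives the ``skew'' inequality $p_j\,p_k \geq p_{j-1}\,p_{k+1}$ for all $j \leq k$ by iterating $p_y/p_{y-1} \geq p_{y+1}/p_y$, and then applies this termwise:
\[
    P(y)\,p_y \;=\; \sum_{j\leq y} p_j\,p_y \;\geq\; \sum_{j\leq y} p_{j-1}\,p_{y+1} \;=\; P(y-1)\,p_{y+1},
\]
which is exactly the inequality you reduce to, but obtained without splitting on the sign of $r_{k+1}-1$ or invoking a geometric tail bound. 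Your approach is correct but more laborious; the paper's termwise inequality is the cleaner idea and also makes the upper-tail case immediate by the same one-line computation, avoiding any appeal to reflection.
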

\begin{proof}
    The continuous case is well-documented in the literature.
    See for example~\cite{bagnoli2005log}.
    For the discrete case,
    observe first that since a mass function is non-negative, 
    and we have assumed contiguous support, 
    the function must be single-peaked, {\em i.e.} quasi-concave, as any local minimum would contradict the definition.
    Furthermore, the definition of log-concavity is equivalent to
    \(
        \tfrac {p_y}{p_{y-1}}\geq \tfrac{p_{y+1}}{p_y}
    \).
    Repeatedly applying this, and rearranging, we get
    \[
        p_yp_{y+k} \geq p_{y-1}p_{y+k+1}\quad \forall y,k\in \mathbb Z,\, k\geq 0\ .
    \]
    
    It remains to show that $P(y):=\sum_{-\infty}^y p_k$ is log-concave. We have
    \begin{align*}
        P(y)P(y)
        &= P(y-1)P(y)+ \sum_{-\infty}^y p_kp_y\\
        &\geq P(y-1)P(y) + \sum_{-\infty}^y p_{k-1}p_{y+1}
        = P(y-1)P(y+1)
    \end{align*}
    as desired. The same technique applies for the upper-sum.
 \end{proof}

This will allow us to then conclude:
\subparagraph*{(Lemma~\ref{lem:log-concave-assumption}, p.\pageref{lem:log-concave-assumption})}
    \textit{Let, $V_\ell^s$ denote the marginal r.v. $V$ conditioned on $L=\ell$ and $D\geq s$.
    Let $Z$ be a continuously-supported random variable, and
    $\gamma_1^s\leq \gamma_2^s\leq \dotsm\in \R$.
    If $V_\ell^s$ is distributed like $\gamma_\ell^s \cdot Z$, 
    $\floor{\gamma_\ell^s \cdot Z}$, $Z+\gamma_\ell^s$, or $\floor{Z+\gamma_\ell^s}$, 
    then Assumption 1  is satisfied if $Z$ is log-concave, 
    or if the $\gamma$'s are independent of $\ell$.}
\begin{proof}
First, observe that 
\[
    \mathbb{P}[V\geq\scalarPrice,D\geq s|L=\ell] = 
    \mathbb{P}[V\geq\scalarPrice|D\geq s,L=\ell]\cdot\mathbb{P}[D\geq s|L=\ell]\ .
\]
and since we are taking ratios for $s$ fixed, we can replace the joint cumulatives on $V$ and $D$ in the assumption, with the marginals on just $V$.

Now, if the $\gamma$'s are independent of $\ell$, then the ratio remains unchanged as $\ell$ changes, satisfying assumption 1.
Otherwise, we begin by analyzing the distributions given by $\gamma_\ell^s Z$ and $Z+\gamma_\ell^s$. 
Let $\F(x):= \mathbb{P}[Z\geq x]$, noting that $\mathbb{P}[V_\ell^s\geq \scalarPrice] = \F(\scalarPrice/\gamma_\ell^s)$ and $\F(\scalarPrice - \gamma_\ell^s)$, for the two cases, respectively.
Note that we wish to show $\mathbb{P}[V_\ell^s\geq \scalarPrice']/\mathbb{P}[V_\ell^s\geq \scalarPrice]$ is increasing, which is equivalent to $\log(\mathbb{P}[V_\ell^s\geq \scalarPrice'])-\log(\mathbb{P}[V_\ell^s\geq \scalarPrice])$ increasing.

For $V_\ell^s\sim Z+\gamma_\ell^s$, observe that for $x'>x$ and $\gamma'>\gamma$, we have
\[
    \log \F(x-\gamma) - \log\F(x'-\gamma) \geq \log \F(x-\gamma') - \log \F(x'-\gamma')
\]
since $\log \F$ is a non-increasing and concave function, by assumption.
Also
\begin{align*}
    \log \F(x/\gamma) - \log\F(x'/\gamma) &\geq \log\F(x/\gamma') - \log\F(x/\gamma' + (x'-x)/\gamma)\\
    &\geq \log\F(x/\gamma') - \log\F(x'/\gamma')
\end{align*}
where the first inequality is the same as the previous equation, as the second is by monotonicity.
Thus we have done the continuous case.

For $V_\ell^s\sim \floor{Z+\gamma_\ell^s}$, we note that $\floor{Z+\gamma}\geq x$ if $Z+\gamma\geq \ceil{x}$.
So the probability is $\F(\ceil x - \gamma)$.
Similarly, for $V_\ell^s\sim \floor{\gamma_\ell^s\cdot Z}$, $\mathbb{P}\floor{\gamma Z}\geq x$ is $\F(\ceil x/\gamma)$.
Thus, if we assume that $x$ and $x'$ are integers, the calculations above go through, 
as desired.
 \end{proof}

We present a final fact that justifies the use of $\floor Z$-type random variables:
\begin{lemma}
    If $Y$ is a discrete log-concave random variable, then there exists a continuous log-concave $Z$ such that $Y\sim \floor Z$.
\end{lemma}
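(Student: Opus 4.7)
The plan is to construct a continuous log-concave density $f$ on $\mathbb{R}$ such that $\int_y^{y+1} f(t)\,dt = p_y$ for every integer $y$ in the support of $Y$; taking $Z$ to be a random variable with density $f$ will then immediately give $\mathbb{P}[\floor{Z}=y]=p_y=\mathbb{P}[Y=y]$, and hence $Y\sim\floor{Z}$.

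For the construction I plan to take $f$ to be piecewise log-linear with breakpoints at the integers: on each interval $[y,y+1]$ I would set $\log f(t)=\log v_y+(t-y)\log(v_{y+1}/v_y)$ for positive values $v_y$ to be determined. Such an $f$ is continuous, log-linear (hence log-concave) on each open piece, and a direct calculation shows that its integral over $[y,y+1]$ equals the logarithmic mean $L(v_y,v_{y+1}):=(v_{y+1}-v_y)/\log(v_{y+1}/v_y)$. Imposing $L(v_y,v_{y+1})=p_y$ then yields a recurrence that uniquely determines $v_{y+1}$ from $v_y$, since $b\mapsto L(a,b)$ is a strictly increasing bijection of $(0,\infty)$ onto itself; this leaves $v_0$ as a single free parameter. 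Global log-concavity of $f$ requires, in addition to log-concavity on each open piece, that the slopes $\log(v_{y+1}/v_y)$ of $\log f$ be non-increasing across the integer breakpoints, i.e., that the sequence $(v_y)$ itself be discretely log-concave.

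The main (and only) obstacle in the proof is therefore to show that $v_0$ can be chosen so that the generated sequence $(v_y)$ is itself discretely log-concave. Writing $r_y:=v_{y+1}/v_y$, $s_y:=p_{y+1}/p_y$, and $g(r):=(r-1)/\log r$, a short computation reduces the recurrence to $g(r_{y+1})=s_y\cdot g(r_y)/r_y$, and shows that $(r_y)$ is non-increasing if and only if $r_y\geq s_y$ for every $y$. The hypothesis that $p$ is discretely log-concave forces $(s_y)$ to be non-increasing, and the monotonicity properties of $g$ (strictly increasing) together with those of $h(r):=g(r)/r$ (strictly decreasing) should let me argue by a monotone-iteration / compactness argument that some initial $v_0$ makes the condition $r_y\geq s_y$ persist for all $y$ in the support. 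Once that existence step is secured, the resulting $f$ is a continuous log-concave density with the prescribed unit-interval marginals, so a random variable $Z$ with density $f$ satisfies $\floor{Z}\sim Y$, completing the proof.
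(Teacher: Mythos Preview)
Your approach differs substantially from the paper's. The paper does not interpolate the log-density; instead it interpolates the log of the \emph{survival function} $P(y)=\mathbb{P}[Y\geq y]$. Since the preceding claim in the appendix establishes that $\log P$ is discretely concave and non-increasing, its piecewise-linear interpolant $\phi$ is automatically concave and non-increasing, and the paper simply takes $Z$ to have survival function $e^{\phi}$ and density $-\tfrac{d}{dx}e^{\phi(x)}$. There is no recurrence, no free parameter, and no existence argument: concavity of $\phi$ is immediate from discrete concavity of $\log P$.

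Your route, by contrast, leaves its hardest step open. You correctly identify that choosing $v_0$ so that the induced sequence $(v_y)$ is itself log-concave is the crux, but ``a monotone-iteration / compactness argument'' is not a proof. Concretely, you must show that the set of initial values $v_0$ for which $r_y\geq s_y$ holds for every $y$ in the support is nonempty; with infinite support this is a genuine limiting argument, and even for finite support you have not exhibited any $v_0$ that works or explained why one must exist. Until that step is filled in, the proposal is incomplete. That said, if you do complete it, your construction has the merit of producing a \emph{density} that is continuous and log-concave, whereas the paper's density $-\phi' e^{\phi}$ is piecewise exponential with upward jumps at the integers whenever the slopes of $\phi$ strictly decrease---so in the paper's construction it is really the survival function, rather than the density, that comes out log-concave.
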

\begin{proof}
    Let $P:\mathbb Z\to [0,1]$ be the right-hand cumulative mass function for $Y$. 
    Then, it suffices to have $\mathbb{P}[Z\geq n] = P(n)$ for all integers $n$. 
    Let $\phi:\R\to\R$ be the piecewise-linear function such that $\phi(-\infty)\to 0$, $\phi(\infty) \to -\infty$, 
    and $\phi(n) = \log(P(n))$ for all $n$. 
    Since $\log(P)$ is a discretely concave and non-increasing function, $\phi$ must be concave and non-increasing. 
    We can then set $Z$ to be the random variable whose density is given by $-\tfrac {\mathrm d}{\mathrm d x} \exp(\phi(x))$.
 \end{proof}

\section{Detailed Proofs}\label{app:proofs}

We present in this section the detailed proofs of the lemmas and theorems from the text. 
\ref{app:proofs:algs} gives the pseudocode for the dynamic programs that compute the optimal pricing policies, outlined in Section~\ref{sec:Bayes-opt-policies},
\ref{app:proofs:monotone-prices} gives the proofs for the monotonicity of the pricing policies, along with the discussion on log-concave random variables from Appendix~\ref{app:log-concave}, ~\ref{app:concentration} gives the concentration bounds from the last part of Section~\ref{sec:Bayes-opt-policies}, and ~\ref{app:robustness} deals with the proof of Section~\ref{sec:policy-with-learning} .

\clearpage \vspace*{4em}

\subsection{MDP Algorithms and Correctness}\label{app:proofs:algs}

\begin{algorithm}
\DontPrintSemicolon
\KwData{MDP with states $\mathcal S$, actions $\mathcal A$, and rewards $R$; and a horizon $T$.}
\KwResult{Optimal policy $\pi^*:[T]\times\mathcal S\to \mathcal A$.}
\Begin{
Initialize $\Uopt T s\gets 0$ for all $s\in\mathcal S$.\;
\For{$t$ from $T-1$ to 0, descending}{
	\For{$s\in \mathcal S$}{
		$\Uopt t s\gets\max_{a\in \mathcal A}\big\{
			\sum_{s'\in \mathcal S}\mathbb{P}[s_{t+1}=s'|s,a]\big(
				\text{Reward}(s\to s'|a)+\Uopt{t+1}{s'}
			\big)
		\big\}$\;
		$\pi^*(t,s)\gets\operatorname{arg\,max}_{a\in \mathcal A}\big\{
			\sum_{s'\in \mathcal S}\mathbb{P}[s_{t+1}=s'|s,a]\big(
				\text{Reward}(s\to s'|a)+\Uopt{t+1}{s'}
			\big)
		\big\}$\;
	}
}
\KwRet{$\pi$}
}
\caption{Backwards Induction for Finite-Horizon MDP's \cite{puterman2005markov}, section 4.5}\label{alg:MDPAlgorithm}
\end{algorithm}

\begin{algorithm}
\DontPrintSemicolon
\KwData{Distribution $Q$, $\LL$, $\VV$, $\mathcal S$ and horizon $T$.}
\KwResult{Optimal policy $\pi^*:[T]\times \mathcal S \times \LL \to \mathcal R$.}
\vspace*{3pt}
Initialize $\Uopt T s \gets 0$ for all $s\in\mathcal S$, and $\Ulen T s \ell \gets 0$ for all $s \in \mathcal S, \ \ell \in \LL$.\;
\For{$t$ from $T-1$ to 0, descending}{
	\For{$s\in \mathcal S$}{
	    \For{$\ell \in \LL$}{\small
	        $\Ulen{t}{s}{\ell} \gets \max_{\scalarPrice \in \VV} \Big\{\mathbb{P}[V \geq \scalarPrice, D \geq s|L=\ell]\ \times $\;
	        \hfill$\big(\scalarPrice + \Uopt{t+1}{s+\ell-1} - \Uopt{t+1}{s-1}\big) +{\Uopt{t+1}{s-1}}
		\Big\}$ \;
		$\pi_t^*(s,\ell)\gets\operatorname{arg\,max}_{\scalarPrice \in \VV} \Big\{\mathbb{P}[V \geq \scalarPrice, D \geq s|L=\ell]\ \times$\;
	        \hfill$\big(\scalarPrice + \Uopt{t+1}{s+\ell-1} - \Uopt{t+1}{s-1}\big) +{\Uopt{t+1}{s-1}}
		\Big\}$\;
	}
	$\Uopt t s \gets \sum_{\ell \in \LL} \mathbb{P}[L=\ell] \Ulen t s \ell.\;$
	}
}
\KwRet{$ \pi^*$}

\caption{Optimal policy in finite horizon}
\label{alg:finite_horizon}
\end{algorithm}

\subparagraph*{(Lemma~\ref{lemma_equivalence}, p.\pageref{lemma_equivalence})}
\textit{For any fixed pricing policy $\pi:[T]\times\SS\times\L\to \R$,\vspace*{-0.35em}}
\[
\Upi t s=\E{L}{\Ulenpi t s L}, \forall t \in T , \ s  \in \SS,\\[-0.35em]
\] \textit{where the $\Upi t {\cdot}$'s are as in \eqref{eq:expected_gain}, and the $\Ulenpi t {\cdot}{\cdot}$'s are from the modified MDP. }

\begin{proof}
The statement is true for $t=T$ since in that case everything is zero.
Suppose $\E{L'}{\Ulenpi{t+1}s{L'}}=\Upi{t+1}s$ for all $s$.
For the fixed policy $\pi$, we define $\bm p^\ell_{t,s}:=\mathbb{P}[V\geq \pi_t(s,\ell), D\geq s|L=\ell]$. Then,
	\begin{align*}
		\E{L}{\Ulenpi t s L}&=\sum_{\ell\in\L}\mathbb{P}[L=\ell]\Ulenpi t s \ell\\
		&=\sum_{\ell\in\L}\mathbb{P}[L=\ell]\Big(
		\pi_t(s,\ell)\bm p^\ell_{t,s} + 
		\bm p^\ell_{t,s} \E{L'}{\Ulenpi{t+1}{s+\ell-1}{L'}}\\
		&\qquad\qquad\qquad\qquad\qquad\ \ \ + (1-\bm p^\ell_{t,s})\E{L'}{\Ulenpi{t+1}{s-1}{L'}}\Big)\\
		&=\sum_{\ell\in\L}\mathbb{P}[L=\ell]\Big(\pi_t(s,\ell)\bm p^\ell_{t,s} + 
		\bm p^\ell_{t,s} \Ulenpi{t+1}{s+\ell-1}{L'}\\
		&\qquad\qquad\qquad\qquad\qquad\qquad\qquad+ (1-\bm p^\ell_{t,s}){\Ulenpi{t+1}{s-1}{L'}}\Big)\\
		&= \E{\X}{\rev_t(\pi,\X) + \Upi{t+1}{S_{t+1}(S_t,\X)}\,|\,S_t=s}\\
	&	=: \Upi t s\qquad\qquad\qquad\qquad\qquad\qquad\qquad\qquad\qquad\qquad\square
	\end{align*}
\end{proof}

\subsection{Monotonicity of Prices}\label{app:proofs:monotone-prices}
These proofs are given in parallel with the discussion in Appendix~\ref{app:log-concave}.
\subparagraph*{(Lemma \ref{lem:Monot-State-U}, p.\pageref{lem:Monot-State-U})}
    \textit{Let $\Uopt t s$ be the expected future revenue earned starting at time $t$ in state $s$, for the optimal policy computed by Algorithm~\ref{alg:finite_horizon}.
    Then the function $s\mapsto\Uopt t s$ is monotone non-increasing in $s$ for any $t$ fixed.
}
\begin{proof}
The proof is by induction on the time, decreasing.
    At time $t=T$, there is no future revenue and $\Uopt T s=0$, so the inductive claim follows trivially.
    Suppose, now, that the inductive claim holds at time $t+1$.
    It suffices to show that this holds for each $\Ulen t s \ell$, since $\Uopt t s$ is simply their expectation.
    Let $\pi^*_{t}$ be the optimal pricing policy computed for the time $t$ by the Algorithm~\ref{alg:finite_horizon}.
    Since the function $\scalarPrice\mapsto \mathbb{P}[V\geq \scalarPrice \ \text{and}\  \mathcal E]$, for any event $\mathcal E$,
    is left-continuous in the variable $\scalarPrice$, we may define, for every $\ell \in \L$ and $s \in \SS,$
    \[
        \scalarPrice_s':= \max\big\{ \scalarPrice:
            \mathbb{P}[V\geq \scalarPrice,D\geq s|L=\ell]\,\geq\,\mathbb{P}[V\geq \pi^*_t(s+1,\ell), D\geq s+1|L=\ell]
        \big\}
    \]
    We must have $\scalarPrice'\geq \pi^*_t(s+1,\ell)$, 
    as $\mu=\pi^*_t(s+1,\ell)$ is in the set.
    Now, letting $\bm p:= \mathbb{P}[V\geq \pi^*_t(s+1,\ell), D\geq s+1|L=\ell]$, we have
    \begin{align*}
        &\Ulen t {s+1} \ell\\ &= \bm p\cdot \pi^*_t(s+1,\ell)  + 
		\bm p\cdot {\Uopt{t+1}{s+\ell}}+ (1-\bm p){\Uopt{t+1} s}\\
		&\leq \bm p\cdot \pi^*_t(s+1,\ell)  + 
		\bm p\cdot {\Uopt{t+1}{s+\ell-1}}+ (1-\bm p){\Uopt{t+1}{s-1}}&\mkern-54mu\text{\small (by induction)}\\
		&\leq \bm p\cdot \Big( \pi^*_t(s+1,\ell) + \Uopt{t+1}{s+\ell-1} - \Uopt{t+1}{s-1}\Big)_+ + \Uopt{t+1}{s-1}\mkern-24mu\\
		&\leq \mathbb{P}[V\geq \scalarPrice_s', D\geq s]\cdot\Big(\scalarPrice_s' + \Uopt{t+1}{s+\ell-1} - \Uopt{t+1}{s-1}
		\Big)_+ + \Uopt{t+1}{s-1}\mkern-36mu\\
		&\leq \Ulen t s \ell &\mkern-54mu\text{\small (subopt. price),}
    \end{align*}
    where $(x)_+:= \max\{x,0\}$. 
    The first inequality holds by the induction hypothesis, the second is by definition of $(\cdot )_+$, the third by the definition of $\scalarPrice_s'$,
and in the last, from the fact that $\scalarPrice_s'$ is a (possibly) suboptimal pricing policy for the state $s$ at time $t$. Note that this last inequality requires that the 0 value be feasible in the max, which it is, by setting $\mu'$ arbitrarily large.
 \end{proof}

\subparagraph*{(Lemma \ref{lem:monot-prices}, p.\pageref{lem:monot-prices})}
    \textit{If the distribution on job parameters satisfies assumption~1, then for all $\ell,s,t$, we have $\pi^*_t(s,\ell)\leq \pi^*_t(s,\ell+1)$.}

\begin{proof}
    Let $\bm p^\ell_s(\scalarPrice):= \mathbb{P}[V\geq \scalarPrice, D\geq s|L=\ell]$, 
    fix $s$, $t$, and $\ell$, 
    and let $\scalarPrice_0$ be equal to the optimal price $\pi^*_t(s,\ell)$.
    Observe that $\scalarPrice_0$ maximizes the expression
    \[
        \bm p^{\ell}_s(\scalarPrice)\big(
            \scalarPrice + \Uopt {t+1}{s+\ell-1} - \Uopt {t+1}{s-1}
        \big) + \Uopt {t+1}{s-1}
    \]
    For simplicity, let $\Delta_\ell:= \Uopt {t+1}{s+\ell-1} - \Uopt {t+1}{s-1}$, 
    and so for any $\scalarPrice\neq \scalarPrice_0$,
    \begin{align*}
        0&\leq \bm p^{\ell}_s(\scalarPrice_0)\big(
            \scalarPrice_0 + \Delta_\ell
        \big) - \bm p^{\ell}_s(\scalarPrice)\big(
            \scalarPrice + \Delta_\ell
        \big)\\
        &= \Big(\bm p^{\ell}_s(\scalarPrice_0) - \bm p^{\ell}_s(\scalarPrice)\Big)\big(
                \scalarPrice_0 + \Delta_\ell\big) + \bm p^{\ell}_s(\scalarPrice)\big(\scalarPrice_0 - \scalarPrice\big)
    \end{align*}
    Note that, as discussed in the proof of the previous lemma, $\mu_0+ \Delta_\ell\geq 0$, as otherwise it would be beneficial to set $\pi_t^*(s,\ell)\gets \infty$. 
    The above inequality is then equivalent to
    \[
        \frac{\bm p^{\ell}_s(\scalarPrice) - \bm p^{\ell}_s(\scalarPrice_0)}{\bm p^{\ell}_s(\scalarPrice)} \leq 
        \frac{\scalarPrice_0-\scalarPrice}{\scalarPrice_0 + \Delta_\ell} 
        \quad \iff \quad 
        \frac{\bm p^{\ell}_s(\scalarPrice_0)}{\bm p^{\ell}_s(\scalarPrice)} \geq 1- 
        \frac{\scalarPrice_0-\scalarPrice}{\scalarPrice_0 + \Delta_\ell} 
    \]
    We wish to show that, if $\scalarPrice\leq \scalarPrice_0$, then as $\ell$ increases, the above inequality still holds. 
    This would imply that the price $\scalarPrice_0=:\pi_9^*(s,\ell)$ gives better return than $\scalarPrice$ for jobs of length $\ell+1$,
    implying that the optimal price must be at least $\pi_t^*(s,\ell)$, which is our desired goal. 
    
    Now, by assumption~1, the left-hand-side is non-decreasing in $\ell$, 
    so it remains to show that the right-hand-side is non-increasing in $\ell$. 
    The only changing term is $\Delta_\ell$, which by Lemma~\ref{lem:Monot-State-U}, is non-increasing in $\ell$.
    Since it is in the denominator of a subtracted, non-negative term, we have our desired result.
 \end{proof}

\subsection{Concentration Bounds on Revenue for Online Scheduling}
\label{app:concentration}

\subparagraph*{(Theorem \ref{thm:concentration-finite-horizon-opt}, p.~\pageref{thm:concentration-finite-horizon-opt})}
\textit{Let $\X$ be a finite sequence of jobs sampled $i.i.d.$ from an underlying distribution $Q$,
	and let $\pi$ be any monotone policy. Then,
	with probability $1-\delta$,} \\[-8pt]
	\[
		\left|\crev_T(\X,\pi) - \E{\X'}{\crev_T(\X',\pi)}\right|
		\leq \VV\cdot \sqrt{2\log(\sfrac 2\delta)T}.
	\]
	\textit{in the finite horizon, and in the infinite-horizon-discounted,}\\[-8pt]
	\[
		\left|\crev_\infty(\X,\pi) - \E{\X'}{\crev_T(\X',\pi)}\right|
		\leq \VV\cdot \sqrt{2\log(\sfrac 2\delta)/(1-\gamma^2)}.
	\]
	\textit{In particular these results hold true for the (approximately) optimal pricing strategy computed in the previous part of the section.}  
\begin{proof}
	For the finite horizon, we apply Azuma's inequality to the martingale $R_t^\pi$. 
We being by showing the bounded-differences property.
	Note that we do not require truthful behaviour from the jobs, since taking strategic behaviour into account for a non-monotone policy is equivalent to modifying the distribution over the jobs, and making the distribution state-dependent, by increasing the length of those jobs who would rather buy a longer interval.
Thus,
\begin{align*}
   &\left| R^\pi_{t+1}-R^\pi_{t}\right|\\&\quad =\Big|\textstyle\sum_{\tau=0}^{t+1} \rev_\tau(\pi,\X) + \E{\X_{>{t+1}}}{\textstyle\sum_{\tau=t+2}^T
    \rev_\tau(\pi,\X)\;\middle|\;S_{t+2}(\pi,\X_{\leq t+1})}\\
    &\quad\qquad\qquad -\textstyle\sum_{\tau=0}^{t} \rev_\tau(\pi,\X) - \E{\X_{>{t}}}{\textstyle\sum_{\tau=t+1}^T
    \rev_\tau(\pi,\X)\;\middle|\;S_{t+1}(\pi,\X_{\leq t})}\Big|\\
    &\quad=\left| \rev_{t+1}(\pi,\X) - \mathbb{E}_{\X_{t+1}}[\rev_{t+1}(\pi,\X)|S_{t+1}(\pi,\X_{\leq t})]\right|\leq \VV
\end{align*}
where the last inequality follows from properties of conditional expectation.
	With this property, we can apply Azuma's, and get
	\begin{align*}
		\left|\crev_T(\X,\pi) - \E{\X'}{\crev_T(\X',\pi)}\right|
		&\leq \sqrt{2\log(\sfrac 2\delta)(T+1)\VV^2}.
	\end{align*}
    For the infinite-horizon-discounted, we 
    observe that equation~\eqref{eq:definition-of-revenue-martingale} becomes 
    \begin{equation*}
    R_i^\pi =  \sum_{t=0}^i \gamma^t\rev_t(\pi,\X) + \E{\X_{>i}}{\textstyle\sum_{t=i+1}^T
    \gamma^t\rev_t(\pi,\X)\;\middle|\;S_{i+1}(\pi,\X_{\leq i})}
    \end{equation*}
and thus we get that $|R_t^\pi-R_{t-1}^\pi|\leq \gamma^t\VV$. Therefore with probability $1-\delta$, 
\begin{align*}
    |R_T^\pi-R_0^\pi|&\leq \sqrt{2\log(2/\delta)\textstyle\sum_{t=0}^T(\gamma^t\VV)^2}\ =\ \VV \cdot \sqrt{2\log(2/\delta)\textstyle\sum_{t=0}^T(\gamma^2)^t}
\end{align*}
Thus, taking the limit as $T\to \infty$, we get that with probability $1-\delta$, \[
		\left|\crev_\infty(\X,\pi) - \E{\X'}{\crev_T(\X',\pi)}\right|
		\leq \VV\cdot \sqrt{2\log(\sfrac 2\delta)/(1-\gamma^2)}.\\[-2em]
	\]
 \end{proof}

\subsection{Robustness of the pricing strategy}\label{app:robustness}

\subparagraph*{(Lemma~\ref{lem:U-error}, p.~\pageref{lem:U-error})} \textit{Let $Q$, and $\hat Q$ such that $|Q - \hat Q| < \varepsilon$. In the finite horizon, $|\Uopt t s - \hUopt t s|<2(T-t)\mathbb V \mathbb L \varepsilon$ for all $t,s$.}
\begin{proof}
    Let $\pi^*$ be the policy computed by Algorithm~\ref{alg:finite_horizon} with access to $Q$.
    As in Section~\ref{sec:Bayes-opt-policies}, we denote $\bm p^\ell_{t,s}:= \mathbb{P}[V\geq \pi^*_t(s,\ell),D\geq s|L=\ell]$, and ${\bm P}(\ell):=\mathbb{P}_{\X}[L=\ell]$.
    In an abuse of notation, denote $\hp$ and $\hP$ the estimated values of $\bm p^\ell_{t,s}$ and ${\bm P}(\ell)$, respectively. 
    We cannot estimate $\bm p^\ell_{t,s}$ directly with good error bounds, but we will only need the values $\hP\hp$ and $\hP(1-\hp)$.
	Now, substituting these estimates into~\eqref{eq:U-def}, we get:
	\begin{subequations}
		\begin{align}
			&\left|\Uopt t s-\hUopt t s\right|\nonumber
\\&\quad = \bigg|\sum_{\ell\in\L}{\bm P}(\ell)\Big(\bm p^\ell_{t,s}\pi^*_t(s,\ell) + 
\bm p^\ell_{t,s} \Uopt{t+1}{s+\ell-1} + (1-\bm p^\ell_{t,s})\Uopt{t+1}{s-1}\Big)\nonumber\\
    &\qquad\quad - \sum_{\ell\in\L}\hat{\bm P}(\ell)\Big(\hp\pi^*_t(s,\ell) + 
    \hp \hUopt{t+1}{s+\ell-1} + (1-\hp)\hUopt{t+1}{s-1}\Big)\bigg|\nonumber
		\end{align}\end{subequations}
	To simplify this expression, we begin by showing a simple claim: Let $x$, $y$, $\hat x$, $\hat y\in \R$, and let $\lambda,\hat\lambda \in [0,1]$, such that $|x-\hat x|<\delta$, $|y-\hat y|<\delta$, and $|\lambda-\hat\lambda|<\varepsilon$. Then
	\begin{align*}
	    \big|\big(\lambda x +& (1-\lambda) y\big)-\big(\hat\lambda \hat x + (1-\hat \lambda) \hat y\big)\big|\\&\leq \big|\big(\lambda x + (1-\lambda) y\big)-\big(\hat \lambda x + (1-\hat \lambda) y\big)\big|+ \\ & \qquad \qquad 
	    \big|\big(\hat \lambda x + (1-\hat \lambda) y\big)-\big(\hat \lambda \hat x + (1-\hat \lambda) \hat y\big)\big|\\
	    &\leq |\lambda-\hat\lambda|\cdot|x-y| + \hat\lambda|x-\hat x| + (1-\hat\lambda)|y-\hat y|\\
	    &\leq \varepsilon|x-y|+\delta
	\end{align*}
	
	Now, replacing $x$ and $y$ with $\big(\pi^*_t(s,\ell)+\Uopt{t+1}{s+\ell-1}\big)$ and $\Uopt{t+1}{s-1}$, respectively, and replacing $\lambda$ with $\bm P(\ell)\bm p^\ell_{t,s}$, we have
	\begin{align}
			\left|\Uopt t s-\hUopt t s\right|
&\leq \sum_{\ell\in\L}\bigg( 2\perr\cdot\sup_{\sigma}\left|\pi_t^*(\sigma,\ell)+\Uopt{t+1}{\sigma+\ell-1}-\Uopt{t+1}{\sigma-1}\right| +\nonumber\\&\qquad\qquad + \hP\cdot\sup_{\sigma'}\left|\Uopt{t+1}{\sigma'}-\hUopt{t+1}{\sigma'}\right|\bigg)\label{eq:sample-error-intermediate}
		\end{align}
		However, the argument of the supremum in left-hand term in the summand must be at most $\VV$, since if $\Uopt{t+1}{\sigma+\ell-1}\leq \Uopt{t+1}{s-1}$, it is best to $\pi^*_t(\sigma)=\infty$, which makes $\bm p^\ell_{t,s}=0$, putting all the weight on $\Uopt{t+1}{s-1}$. Furthermore, we have shown in Lemma~\ref{lem:Monot-State-U} that $\Uopt{t+1}{s+\ell-1}\leq \Uopt{t+1}{s-1}$.
 		Thus, we get
	\begin{align*}
			\left|\Uopt t s-\hUopt t s\right|
&\leq \textstyle\sup_{\sigma'}\left|\Uopt{t+1}{\sigma'}-\hUopt{t+1}{\sigma'}\right| + \textstyle\sum_{\ell\in\L} 2\perr\VV
		\end{align*}
	
	Inductively applying this gives $\left|\Uopt t s-\hUopt t s\right|\leq 2(T-t)\LL\VV\perr$ as desired.
 \end{proof}

\section{Extensions}\label{app:extensions}

In this section,
we extend the finite-horizon results to compute the optimal policies in the infinite-horizon-discounted setting, and also to argue that the optimal policy may be computed within some error when the distribution over values is continuous, rather than discrete.

These results are needed to show the full statements of Theorems~\ref{thm:monotone-opt}--\ref{thm:summary-infinite}.

\subsection{\textbf{Infinite Discounted Horizon}}\label{sec:Infinite-Horizon-Approx}

Recall, in this infinite horizon discounted setting, we seek to maximize the $\gamma$-discounted future revenue, 
\[
    \crev_{\infty}(\mathcal X,\pi):=\sum_{t=0}^{\infty} \gamma^t\rev_t(\mathcal X,\pi)
\]
over the choice of $\pi:\mathbb N\times \SS\times \L\to \R$.
Algorithm~\ref{alg:finite_horizon} does not allow us to immediately compute a solution for the infinite discounted horizon case.
However we can exploit the discounting factor on the revenues to obtain an approximation of the infinite optimum: it suffices to consider the truncated problem up to a certain sufficiently large  $T$ and solve it optimally using the algorithm presented above. In fact we have the following Lemma.  

\begin{lemma} \label{lemma:infinite_horizon}
For any $\varepsilon>0$ and $T\geq \log_\gamma\left(\varepsilon(1-\gamma)/\VV\right)$, 
let $\pi$ be the pricing policy computed by the finite-horizon algorithm up to time $T$. Let $\bar\pi$ be the time-independent pricing policy such that $\bar\pi(\cdot,\cdot):=\pi_0(\cdot, \cdot)$.
Then the expected performance of the optimal policy in the infinite horizon is within an additive $\varepsilon$ of expected performance of $\bar\pi$.\end{lemma}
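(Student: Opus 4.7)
The plan is to recognize that the $\gamma$-discounted finite-horizon backward induction is exactly value iteration for the infinite-horizon discounted MDP, and to invoke the standard contraction and greedy-policy machinery. Let $\mathcal{T}$ denote the Bellman optimality operator of the (modified) MDP from Section~\ref{sec:MDP-description} with $\gamma$-discounted successor values; concretely,
\[
(\mathcal{T}V)(s) \;=\; \E{L}{\max_{\scalarPrice \geq 0}\bigl\{\mathbb{P}[V\geq \scalarPrice,\, D\geq s\mid L]\bigl(\scalarPrice + \gamma V(s+L-1) - \gamma V(s-1)\bigr)\bigr\} + \gamma V(s-1)}.
\]
Standard discounted-MDP theory \cite{puterman2005markov} gives that $\mathcal{T}$ is a $\gamma$-contraction in the sup-norm, with unique fixed point the optimal time-independent value function $U^*_\infty$. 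Running Algorithm~\ref{alg:finite_horizon} with $\gamma$-discounting, started from $U\equiv 0$ at time $T$ and iterated backwards, produces $U^T_0 = \mathcal{T}^T \mathbf{0}$.

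By the contraction property,
\[
\|U^T_0 - U^*_\infty\|_\infty \;\leq\; \gamma^T\,\|U^*_\infty\|_\infty \;\leq\; \gamma^T\,\VV/(1-\gamma),
\]
since any discounted revenue stream is bounded by $\VV\sum_{t\geq 0}\gamma^t = \VV/(1-\gamma)$. For $T \geq \log_\gamma\!\bigl(\varepsilon(1-\gamma)/\VV\bigr)$ this is at most $\varepsilon$, already recovering the bound claimed in the lemma at the level of value functions.

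To upgrade this value-function approximation into a guarantee on the \emph{policy} $\bar\pi$, observe that $\bar\pi = \pi_0$ is the greedy policy with respect to $U^T_1 = \mathcal{T}^{T-1}\mathbf{0}$, and $\|U^T_1 - U^*_\infty\|_\infty \leq \gamma^{T-1}\VV/(1-\gamma)$ by the same contraction argument. The standard approximate-value-iteration policy-loss inequality then yields
\[
\|U^{\bar\pi}_\infty - U^*_\infty\|_\infty \;\leq\; \tfrac{2\gamma}{1-\gamma}\,\|U^T_1 - U^*_\infty\|_\infty,
\]
where $U^{\bar\pi}_\infty$ is the infinite-horizon value of applying $\bar\pi$ stationarily. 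The main obstacle is this extra factor $2\gamma/(1-\gamma)$: to literally meet the additive-$\varepsilon$ target one inflates $T$ by an additive $O(\log(1/(1-\gamma)))$ term, which is absorbed into the $O(\KK^3 \log_\gamma(\cdot))$ running-time bound quoted in Theorem~\ref{thm:monotone-opt}. Everything else reduces to checking that our MDP has bounded per-step reward $\VV$ and that $\mathcal{T}$ has the stated form, both of which are immediate from the model of Section~\ref{sec:model}.
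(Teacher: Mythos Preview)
Your contraction argument for the value functions is correct: $\|\mathcal T^T\mathbf 0 - U^*_\infty\|_\infty \leq \gamma^T\VV/(1-\gamma)\leq\varepsilon$ for the stated $T$. The gap is in the passage from value functions to the stationary policy $\bar\pi$. The generic greedy-policy-loss inequality you invoke only yields
\[
\|U^{\bar\pi}_\infty - U^*_\infty\|_\infty \;\leq\; \frac{2\gamma}{1-\gamma}\cdot\gamma^{T-1}\frac{\VV}{1-\gamma}\;\leq\;\frac{2\varepsilon}{1-\gamma},
\]
which is \emph{not} the additive $\varepsilon$ claimed by the lemma for the stated threshold on $T$. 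You notice this and propose to enlarge $T$; that salvages the asymptotic running time in Theorem~\ref{thm:monotone-opt}, but it does not prove Lemma~\ref{lemma:infinite_horizon} as written.

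The paper avoids the $2\gamma/(1-\gamma)$ loss by a direct argument that replaces your greedy-policy bound with a lossless comparison. First it compares the non-stationary finite-horizon optimum $\pi$ (extended by pricing $\infty$ after $T$) to the infinite-horizon optimum $\pi^*$: $\pi$ dominates $\pi^*$ on $[0,T]$, and the tail after $T$ contributes at most $\gamma^T\VV/(1-\gamma)\leq\varepsilon$, so $U^{\pi}_\infty\geq U^{\pi^*}_\infty-\varepsilon$. Second---and this is the step you are missing---it shows $U^{\bar\pi}_\infty\geq U^{\pi}_\infty$ with \emph{no} loss, via a monotone-shifting trick: define $\pi^i$ to use $\pi_0$ for $t\leq i$ and $\pi_{t-i}$ afterwards. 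Then $\pi^{i+1}$ agrees with $\pi^i$ up to time $i$, and from time $i+1$ onward it is the (time-shifted) finite-horizon optimum on $[i+1,T+i+1]$, hence dominates $\pi^i$ there, while both policies earn nothing beyond $T+i+1$. Thus the values along $(\pi^i)_{i\geq 0}$ are non-decreasing and the sequence converges to $\bar\pi$, giving $U^{\bar\pi}_\infty\geq U^{\pi}_\infty\geq U^{\pi^*}_\infty-\varepsilon$ exactly. This exploitation of stationarity of the environment is what makes the stated $T$ sufficient, and is sharper than the black-box policy-loss bound.
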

\begin{proof}
    Note that in order to compute policy $\pi$ it is necessary to add the discount factor to Algorithm~\ref{alg:finite_horizon}, and to all of the proofs of previous sections. One can verify that all proofs go through.
    Let $\pi^*$ be the Bayes-optimal infinite-horizon strategy --- which is known to be time-independent --- and let $\pi$ be as in the statement (where we set $\pi_t(s,\ell)=\infty$ for all $t>T$.)
    Then, in expectation over times $0$ through $T$, pricing as $\pi$ yields greater revenue than following $\pi^*$.
    Conversely, in expectation over all time, pricing as $\pi^*$ yields greater revenue than $\pi$.
    However, after time $T$, the maximum possible revenue due to any policy is 
    \[
        \textstyle\sum_{t=T}^\infty \gamma^t\cdot \VV = \gamma^T \cdot \VV \cdot \left(\tfrac 1{1-\gamma}\right) \ \leq\ \varepsilon
    \]
    And so the difference in revenue due to following $\pi$ or $\pi^*$ is at most $\varepsilon$, since $T$ is sufficiently large.
    
    It remains to show that $\bar \pi$ performs better than $\pi$ overall.
    Let $\pi^i$ be the policy which agrees with $\pi_0$ for all $t\leq i$, 
    then equals $\pi_{t-i}$ for $t>i$.
    Observe that, $\pi^1$ is optimal in expectation over the interval $[1,T+1]$, and is equivalent to $\pi=\pi^0$ for the first step. Therefore, $\pi^1$ performs better than $\pi$.
    Similarly, we can argue $\pi^{i+1}$ performs better than $\pi^{i}$ over the interval $[i,T+i]$ and equally before, hence performs better overall.
    
    Thus, we have a sequence of policies $\pi=\pi^0,\,\pi^1,\,\pi^2,\,\dotsc$ converging to $\bar \pi$, and whose expected revenue is monotone non-decreasing along the sequence. Therefore, the expected revenue due to $\bar \pi$ is greater than that of $\pi$, which is an $\varepsilon$ additive-approximation to the optimal policy.
 \end{proof}
The approach above is analogous to the classical value iteration technique \cite{puterman2005markov}.

\subparagraph*{(Lemma~\ref{lem:U-error}, p.~\pageref{lem:U-error})} \textit{Let $Q$, and $\hat Q$ such that $|Q - \hat Q| < \varepsilon$. In the infinite horizon, $|\Uopt {} s - \hUopt {} s|<2\mathbb V \mathbb L\varepsilon / (1-\gamma)$ for all $s$.}
\begin{proof}
    As in the proof of Lemma~\ref{lemma:infinite_horizon}, if $T$ is sufficiently large, we may analyze the first $T$ time steps as a finite-horizon problem, and the remaining revenue will be negligibly small. 
    Now, the calculation above can be reproduced with discount terms, to show
    \begin{align*}
			\left|\Uopt t s-\hUopt t s\right|
&\leq \textstyle\sup_{\sigma'}\left|\gamma\Uopt{t+1}{\sigma'}-\gamma\hUopt{t+1}{\sigma'}\right| + \textstyle\sum_{\ell\in\L} 2\perr\VV
		\end{align*}
    Then, inductively applying this and taking $T\to \infty$, we have $
        |\Uopt{0}{s}-\hUopt{0}{s}|\leq 2\VV\LL\perr/(1-\gamma)$.
 \end{proof}

These results are used to prove the infinite-horizon versions of the various results throughout the paper, specifically the Theorems~\ref{thm:monotone-opt}--\ref{thm:concentration-finite-horizon-opt},~and~\ref{thm:summary-infinite}.

\subsection{\textbf{Approximation Algorithm for Continuously Supported Values}} \label{sec:Approx-Grid-Values}
Note that the algorithms above assume that the {\em value} of the jobs ($V$) is discretely supported, and the running time depends on~$|\V|$. In this section, we analyze the error incurred by discretizing the space of possible values, and then computing the optimal policy.

Let $\eta>0$ be some desired small grid size, and suppose we only allow ourselves to set prices which are multiples of $\eta$. We claim that this incurs a small loss on the total revenue.

Define, as in the previous subsections, $\bm p^\ell_{s}(\scalarPrice):=\mathbb{P}[V\geq\scalarPrice,D\geq s|L=\ell]$.
Further, define as previously $\Uopt t s=\E{L}{\Ulen t s L}$, and \[
    \Ulen t s \ell := \max_{\scalarPrice\in
    \R}\left[\bm p^\ell_s(\scalarPrice)\big(\scalarPrice+\Uopt{t+1}{s+\ell-1}\big) + (1-\bm p^\ell_s(\scalarPrice))\Uopt{t+1}{s-1}\right]
\]
Define $\Ugrid t s$ and $\Ugridlen t s \ell$ similarly, restricting the maximum to choosing $\scalarPrice$ from multiples of $\eta$.

\begin{lemma}\label{lem:revenue-loss-if-grid}
    Let $\Uopt t \cdot$ and $\Ugrid t \cdot$ be defined as above, then $|\Uopt t s-\Ugrid t s|\leq (T-t)\eta\ \forall s,t$.
\end{lemma}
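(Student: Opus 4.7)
The proof is by backward induction on $t$. The base case $t=T$ is immediate since both $\Uopt T s$ and $\Ugrid T s$ are zero. The upper bound $\Ugrid t s \leq \Uopt t s$ is automatic, as the grid maximization is a restriction of the continuous one, so only the direction $\Uopt t s - \Ugrid t s \leq (T-t)\eta$ needs argument.

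For the inductive step, I fix $s,\ell$ and let $\mu^*$ attain the maximum defining $\Ulen t s \ell$. My plan is to round $\mu^*$ down to the nearest multiple of $\eta$, call it $\mu''$, and show this suffers at most $\eta$ additional loss at the current stage. Writing the stage reward as $\bm p^\ell_s(\mu)(\mu+\Delta) + \Uopt{t+1}{s-1}$ with $\Delta := \Uopt{t+1}{s+\ell-1} - \Uopt{t+1}{s-1}$ (non-positive by Lemma~\ref{lem:Monot-State-U}), and recalling that optimality of $\mu^*$ forces $\mu^*+\Delta \geq 0$, rounding down only increases $\bm p^\ell_s$ and shifts $\mu$ by at most $\eta$. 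Provided $\mu''+\Delta \geq 0$, the chain $\bm p^\ell_s(\mu'')(\mu''+\Delta) \geq \bm p^\ell_s(\mu^*)(\mu''+\Delta) \geq \bm p^\ell_s(\mu^*)(\mu^*+\Delta) - \eta$ (the last step using $\bm p^\ell_s \leq 1$) delivers the per-stage $\eta$ bound.

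To close the induction, I feed the grid continuation values $\Ugrid{t+1}{\cdot}$ into the grid-reward expression, using the inductive hypothesis $\Ugrid{t+1}{s'} \geq \Uopt{t+1}{s'} - (T-t-1)\eta$ applied linearly across both branches of the convex combination (the coefficient sums to $1$, so only one factor of $(T-t-1)\eta$ appears). Combining with the per-stage loss of $\eta$ yields $\Ugridlen t s \ell \geq \Ulen t s \ell - (T-t)\eta$, and taking expectation over $L$ then gives $\Uopt t s - \Ugrid t s \leq (T-t)\eta$, completing the induction.

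The main obstacle I expect is the corner case where $\mu'' + \Delta$ becomes negative after rounding, which can occur when $\mu^*+\Delta < \eta$. The clean fix is to handle this regime separately: there the continuous reward $\bm p^\ell_s(\mu^*)(\mu^*+\Delta)$ is itself below $\eta$, so on the grid I may simply choose any sufficiently large multiple of $\eta$ to serve as a ``refusal'' price (yielding $\bm p^\ell_s = 0$), which matches the ``no-sale'' value $\Uopt{t+1}{s-1}$ up to the inductive continuation error. Outside this corner, the proof is a routine tracking of the two error sources (one-step rounding and accumulated continuation error).
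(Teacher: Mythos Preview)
Your proposal is correct and follows essentially the same approach as the paper's proof: backward induction on $t$, round the optimal price $\mu^*$ down to the nearest grid point, bound the per-stage loss by $\eta$ using that $\bm p^\ell_s(\cdot)$ is non-increasing and bounded by $1$, and then propagate the continuation error $(T-t-1)\eta$ through the convex combination. In fact your treatment is slightly more careful than the paper's: the paper's one-line inequality $\bm p^\ell_s(\hat\mu)(\hat\mu+\Delta)\geq \bm p^\ell_s(\mu^*)(\mu^*-\eta+\Delta)$ tacitly assumes $\hat\mu+\Delta\geq 0$, whereas you explicitly isolate and dispatch the corner case $\mu^*+\Delta<\eta$ via the ``refusal'' price, which is the right patch.
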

\begin{proof}
    We will show this by induction on the value of $t$, decreasing. 
    Assume that $|\Uopt t s-\Ugrid t s|<\Delta_t$ for all $t$, $s$, and set $\Delta_{T+1}=0$. We wish to inductively bound the value of $\Delta_t$. Now,
    \begin{align*}
    &\Ugridlen t s \ell \\&= \max_{\scalarPrice\in
    \eta\cdot\mathbb Z}\left[\bm p^\ell_s(\scalarPrice)\big(\scalarPrice+\Ugrid{t+1}{s+\ell-1}\big) + (1-\bm p^\ell_s(\scalarPrice))\Ugrid{t+1}{s-1}\right]\\
    &\geq\max_{\scalarPrice\in
    \eta\cdot\mathbb Z}\Big[\bm p^\ell_s(\scalarPrice)\big(\scalarPrice+\Uopt{t+1}{s+\ell-1}-\Delta_{t+1}\big) +\\&\mkern236mu +(1-\bm p^\ell_s(\scalarPrice))\Uopt{t+1}{s-1}-\Delta_{t+1}\Big]\\
    &=-\Delta_{t+1}+\max_{\scalarPrice\in
    \eta\cdot\mathbb Z}\left[\bm p^\ell_s(\scalarPrice)\big(\scalarPrice+\Uopt{t+1}{s+\ell-1}-\Uopt{t+1}{s-1}\big) + \Uopt{t+1}{s-1}\right]
    \end{align*}
    Now, let $\scalarPrice^*$ be the optimizer of this right hand side over $\R$ (where the value would attain $\Ulen t s \ell$), and $\hat \scalarPrice$ be $\scalarPrice^*$ rounded {\em down} to the nearest multiple of $\eta$.
    Then, since $\bm p^\ell_s(\cdot)$ is non-increasing,
    \begin{align*}
        \bm p^\ell_s(&\hat \scalarPrice)\big(\hat \scalarPrice+\Uopt{t+1}{s+\ell-1}-\Uopt{t+1}{s-1}\big) + \Uopt{t+1}{s-1}\\
        &\geq \bm p^\ell_s(\scalarPrice^*)\big(\scalarPrice^*-\eta+\Uopt{t+1}{s+\ell-1}-\Uopt{t+1}{s-1}\big) + \Uopt{t+1}{s-1}\\
        &= \Ulen t s \ell - \eta\cdot \bm p^\ell_s(\scalarPrice^*)
    \end{align*}
    Thus combining both equations, we get 
    \[
        \Ugridlen t s \ell \ \leq \ \Ulen t s \ell \ \leq \ 
        \Ugridlen t s \ell + \eta + \Delta_{t+1}
    \]
    From which we conclude, by averaging over $\ell$, that $\Delta_t\leq (T-t)\eta$, as desired.
 \end{proof}

\begin{corollary}
    Let $\Uopt {} \cdot$ and $\Ugrid {\infty} \cdot$ be defined as above, but for the infinite horizon discounted, then $|\Uopt {} s-\Ugrid {\infty} s|\leq \eta/(1-\gamma)\ \forall s$.
\end{corollary}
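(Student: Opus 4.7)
The plan is to mirror the inductive argument of Lemma~\ref{lem:revenue-loss-if-grid}, but since the horizon is infinite I replace the backwards induction on $t$ by a fixed-point/Banach-contraction style argument. By Lemma~\ref{lemma:infinite_horizon} the Bayes-optimal $\gamma$-discounted policy is time-independent, so $\Uopt{}{s}$ satisfies the Bellman equation $\Uopt{}{s}=\E{L}{\Ulen{}{s}{L}}$ where
\[
\Ulen{}{s}{\ell} \;=\; \max_{\scalarPrice\in\R}\Big[\bm p_s^\ell(\scalarPrice)\big(\scalarPrice+\gamma\Uopt{}{s+\ell-1}\big)+\big(1-\bm p_s^\ell(\scalarPrice)\big)\gamma\Uopt{}{s-1}\Big],
\]
and $\Ugrid{\infty}{s}$, $\Ugridlen{\infty}{s}{\ell}$ satisfy the analogous equations with $\scalarPrice$ restricted to $\eta\cdot\mathbb Z$.

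Let $\Delta:=\sup_s|\Uopt{}{s}-\Ugrid{\infty}{s}|$, which is finite since $\SS$ is finite. Fix $s,\ell$ and let $\scalarPrice^*$ be the unconstrained maximizer in the definition of $\Ulen{}{s}{\ell}$, and $\hat\scalarPrice$ its round-down to $\eta\cdot\mathbb Z$. Exactly as in the proof of Lemma~\ref{lem:revenue-loss-if-grid}, the monotonicity of $\bm p^\ell_s(\cdot)$ shows that evaluating the Bellman bracket at $\hat\scalarPrice$ (still with $\Uopt{}{}$ on the right-hand side) loses at most $\eta\cdot\bm p^\ell_s(\scalarPrice^*)\leq \eta$ compared with evaluating at $\scalarPrice^*$. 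Next, swapping $\Uopt{}{}$ for $\Ugrid{\infty}{}$ inside the bracket changes each of the two $\gamma\Uopt{}{\cdot}$ terms by at most $\gamma\Delta$, and the convex combination preserves this bound. Combining the two perturbations and averaging over $\ell$ yields $\Uopt{}{s}-\Ugrid{\infty}{s}\leq \eta+\gamma\Delta$.

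Since the grid strictly restricts the admissible prices, $\Ugrid{\infty}{s}\leq\Uopt{}{s}$, so in fact $|\Uopt{}{s}-\Ugrid{\infty}{s}|\leq\eta+\gamma\Delta$. Taking the supremum over $s$ on the left gives $\Delta\leq \eta+\gamma\Delta$, which rearranges to $\Delta\leq \eta/(1-\gamma)$, proving the claim.

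The only subtle point is correctly tracking the $\gamma$ factor that multiplies the downstream value terms in the discounted Bellman equation: this is precisely what converts the additive accumulation $(T-t)\eta$ of the finite-horizon lemma into the geometric sum $\eta\sum_{k\geq 0}\gamma^k=\eta/(1-\gamma)$. An alternative route would be to truncate the infinite horizon at $T\geq \log_\gamma(\varepsilon(1-\gamma)/\VV)$ using Lemma~\ref{lemma:infinite_horizon}, apply Lemma~\ref{lem:revenue-loss-if-grid} inside each discounted time step so that the $t$-th step contributes at most $\gamma^t\eta$, sum the geometric series, and send $\varepsilon\to 0$; this gives the same bound but is slightly less direct than the contraction argument above.
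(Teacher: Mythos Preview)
Your proof is correct and follows essentially the same idea as the paper's: both derive the one-step inequality $\Delta\le\eta+\gamma\Delta$ from the discounted Bellman recursion using the round-down argument of Lemma~\ref{lem:revenue-loss-if-grid}. The only cosmetic difference is that the paper obtains it via backward induction on a finite horizon followed by $T\to\infty$ (your ``alternative route''), whereas you apply the contraction directly to the stationary Bellman equations; the content is the same.
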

\begin{proof}
    As shown in the previous subsection, it suffices to perform the analysis in the finite horizon, while taking the discount factor into account, then take the limit as $T\to \infty$.
    The same calculations as above gives
    \begin{align*}
        &\Ugridlen {t} s \ell \\& \geq -\Delta_{t+1} + \max_{\scalarPrice\in\eta\mathbb Z}\left[
            \bm p^\ell_s(\scalarPrice)\left(\scalarPrice + \gamma\Uopt {t+1}{s+\ell-1} - \gamma\Uopt{t+1}{s-1} \right) + \gamma\Uopt{t+1}{s-1}
        \right]\\&\geq \Ulen{t+1}s \ell - \eta - \gamma\Delta_{t+1}
    \end{align*}
    Summing the $\Delta$'s and taking $T\to\infty$, we get $\Ugridlen{\infty} s \ell \geq \Ulen {} s \ell - \eta/(1-\gamma)$ as desired.
 \end{proof}

\end{document}